\documentclass[11pt]{article}

\usepackage[english]{babel}
\usepackage[ansinew]{inputenc}
\usepackage{amsmath, amsthm, amsfonts, amssymb}
\usepackage{latexsym}
\usepackage{bbm}
\usepackage{mathrsfs}
\usepackage{tikz}
\usepackage{graphicx}
\usepackage{multirow}
\usepackage{subfig}

\addtolength{\oddsidemargin}{-15mm}
\addtolength{\textwidth}{30mm}
\addtolength{\topmargin}{-20mm}
\addtolength{\textheight}{30mm}


\newtheorem{theorem}{Theorem}

\newtheorem{lemma}{Lemma}
\newtheorem{proposition}{Proposition}

\theoremstyle{definition}

\newtheorem{remark}{Remark}
\newtheorem{problem}{Problem}

\newtheorem{example}{Example}

\renewcommand{\qedsymbol}{$\Box$}


\newcommand{\N}{\mathbb{N}}
\newcommand{\Z}{\mathbb{Z}}
\newcommand{\Q}{\mathbb{Q}}

\newcommand{\C}{\mathbb{C}}

\newcommand{\e}{\varepsilon}
\newcommand{\la}{\lambda}

\newcommand{\Oh}{\mathcal{O}}

\newcommand{\Sat}{\mathrm{Sat}}

\newcommand{\Chow}{\mathit{Chow}}
\newcommand{\Det}{\mathit{Det}}  
 
\renewcommand{\det}{\mathrm{det}}

\newcommand{\Frac}{\mathrm{Frac}}
\newcommand{\ot}{\otimes}
\newcommand{\poly}{\mbox{\small poly}}
\newcommand{\dc}{\mathrm{dc}}
\newcommand{\VP}{\mathrm{VP}}
\newcommand{\VNP}{\mathrm{VNP}}
\newcommand{\Per}{\mathit{Per}}  
\newcommand{\sk}{\mathrm{sk}} 
\def\cq{/\hspace{-0.7ex}/}
\newcommand{\ol}[1]{\overline{#1}}

\DeclareMathOperator{\id}{id}
\DeclareMathOperator{\Gl}{GL}
\DeclareMathOperator{\Sl}{SL}

\DeclareMathOperator{\Sym}{Sym}

\DeclareMathOperator{\Pol}{Pol}

\DeclareMathOperator{\per}{per}

\bibliographystyle{plain}

\begin{document}

\title{{\bf Permanent versus determinant: not via saturations}} 
\author{Peter B\"urgisser\thanks{
Institute of Mathematics, Technische Universit\"at Berlin, 
pbuerg@math.tu-berlin.de.
Partially supported by DFG grant BU 1371/3-2.}
\and 
Christian Ikenmeyer\thanks{
Texas A\&M University, 
ciken@math.tamu.edu}
 \and
Jesko H\"uttenhain\thanks{
Institute of Mathematics, Technische Universit\"at Berlin, 
jesko@math.tu-berlin.de.
Partially supported by DFG grant BU 1371/3-2.} 
}
\date{\today}

\maketitle

\begin{abstract}
Let $\Det_n$ denote the closure of the $\Gl_{n^2}(\C)$-orbit of the determinant polynomial $\det_n$ 
with respect to linear substitution. The highest weights (partitions) of irreducible $\Gl_{n^2}(\C)$-representations 
occurring in the coordinate ring of $\Det_n$ form a finitely generated monoid $S(\Det_n)$. 
We prove that the saturation of $S(\Det_n)$ contains all partitions $\la$ with length at most~$n$ 
and size divisible by~$n$. This implies that representation theoretic obstructions for the 
permanent versus determinant problem must be holes of the monoid $S(\Det_n)$. 
\end{abstract}

\smallskip

\noindent{\bf AMS subject classifications:} 68Q17, 14L24

\smallskip

\noindent{\bf Key words:} geometric complexity theory, permanent versus determinant, 
representations, saturation of monoids

\section{Introduction}

It is a known fact~\cite{Val:79b} that every polynomial $f\in\C[X_1,\ldots,X_\ell]$ can be written 
as the determinant of some $n$ by $n$ matrix $A$, whose entries are affine linear 
polynomials in the variables~$X_i$. The smallest possible $n\ge 1$ is called the 
{\em determinantal complexity} $\dc(f)$ of $f$. 

The determinantal complexity of the permanent $\per_m$ of an $m$ by $m$ matrix $X_{ij}$ 
of variables is of paramount interest for complexity theory. 
It is easy to see that $\dc(\per_2) =2$. 
Only recently~\cite{abv:15}, it was shown that $\dc(\per_3)=7$, 
improving a similar result~\cite{hu-ik:14} in a restricted model. 
Grenet~\cite{grenet:11} showed that $\dc(\per_m) \le 2^m -1$. 
The best known lower bound, due to Mignon and Ressayre~\cite{mignon-ressayre:04}, 
states that $\dc(\per_m) \ge \frac12 m^2$. 
Over the reals, this lower bound was recently improved to 
$(m-1)^2+1$ in \cite{yabe:15}.

Valiant conjectured that $\dc(\per_m)$ grows superpolynomially fast in~$m$. 
The importance of this conjecture derives from the fact that it is equivalent to 
the separation of complexity classes $\VP_{\mathrm{ws}}\ne\VNP$, 
cf.~\cite{Val:79b,vali:82,buer:00-3,mapo:04}. 
The latter separation would constitute a substantial step towards resolving 
the famous $\mathrm{P}\ne\mathrm{NP}$ conjecture.

One can study the above questions as orbit closure problems in the following way, 
cf.\ Mumuley and Sohoni~\cite{gct1}. 
The group $G:=\Gl_{n^2}(\C)$ acts on the space $\Pol_n(\C^{n^2})$ of homogeneous forms of 
degree~$n$ by linear substitution, see \S\ref{se:proofs} for the precise definition. 
Let $\Det_n$ denote the closure of the orbit of $\det_n$. 
(One obtains the same closure for the Euclidean and the Zariski topology, 
cf.~\cite[Thm.~2.33]{mumf:95}.) 

Suppose $\dc(\per_m) \le n$. This means that $\per_m = \det(A)$ for some matrix $A$ of size $n\ge m$ 
with entries that are affine linear in the variables $X_{ij}$, $1\le i,j\le m$. 
Introducing an additional variable $Y$, substituting $X_{ij}$ by $X_{ij}/Y$, 
and multiplying with $Y^{n}$, we get 
$Y^{n-m} \per_m = \det(B)$ with a matrix $B$ of size $n$, whose entries are linear forms 
in $Y$ and the $X_{ij}$. We may view $\det(B)$ as being obtained from $\det_n$ by 
applying a (noninvertible) linear transformation of the variables. 
Since $\Gl_{n^2}$ is dense in $\C^{n^2\times n^2}$, we conclude that the 
{\em padded permanent} $Y^{n-m} \per_m$ is contained in the orbit closure~$\Det_n$.
Let $\Per_{n,m}$ denote the orbit closure of $Y^{n-m} \per_m$. 
Summarizing, we have seen that $\Per_{n,m}\not\subseteq \Det_n$ implies $\dc(\per_m) > n$. 

How could one possibly prove noncontainement of these orbit closures? 
Mulmuley and Sohoni~\cite{gct1,gct2} suggested the following elegant approach based on the representation theory 
of the group $G$. It is well known that the isomorphism types of irreducible rational $G$-modules are in bijective 
correspondence with the highest weights of $G$, which in turn can be encoded as vectors of integers $\la\in\Z^{n^2}$  
with weakly decreasing entries, cf.~\cite{fuha:91}. We write $\Lambda^+_G$ for the monoid of highest weights of $G$ 
and denote by $V_G(\lambda)$ an irreducible $G$-module of type~$\la$. 
Moreover, we denote by $\Lambda^+_G(\poly)$ the submonoid labeling the polynomial  $G$-modules.
Note that $\la \in \Lambda^+_G(\poly)$ iff all the components of $\la$ are nonnegative, i.e., 
$\la$ is a partition of $|\la| := \sum_i \la_i$. The length $\ell(\la)$ of $\la$ is the number of 
its nonzero parts.

Let $Z$ be any $G$-invariant closed subvariety of $\Pol_n(\C^{n^2})$. 
The group $G$ acts on the coordinate ring $\Oh(Z)$ of $Z$ via 
$(g F)(z):= F(g^{-1}z)$, where $g\in G, F\in\Oh(Z), z\in Z$. 
We are interested in the set of irreducible $G$-representations occurring in $\Oh(Z)$ 
and thus define 
\begin{equation}\label{def:S}
 S(Z) :=\big\{\la\in\Lambda^+_G \mid V_G(\la)^* \mbox{ occurs in }\Oh(Z)\big\} .
\end{equation}
It is known that $S(Z)$ is a finitely generated submonoid of $\Lambda^+_G(\poly)$, 
cf.~\cite{brion:87}. 
We call $S(Z)$ the {\em monoid of representations of the $G$-variety $Z$}.
We are mainly interested in the case where $Z$ is the orbit closure of some $h\in\Pol_n(\C^{n^2})$. 
Then it is easy to see that  $|\la| \equiv 0 \bmod n$ for any $\la\in S(Z)$. 
Moroever, if $h$ depends on $\ell$ variables only, then it is known that 
$\ell(\la) \le \ell$ for all $\la\in S(Z)$, cf.~\cite[\S6.3]{BLMW:11}.  

We make now the following general observation: let $Z'$ be a closed $G$-invariant subset of $Z$. 
The restriction from $Z$ to $Z'$ provides a surjective $G$-invariant ring morphism 
$\Oh(Z) \to\Oh(Z')$.  Hence Schur's lemma implies that $S(Z') \subseteq S(Z)$. 
These reasonings yield the implications 
\begin{equation}\label{eq:MS-attack}
  S(\Per_{n,m})\not\subseteq S(\Det_n) \Longrightarrow \Per_{n,m} \not\subseteq \Det_n 
  \Longrightarrow \dc(\per_m) > n .
\end{equation}
Mulmuley and Sohoni's~\cite{gct1,gct2} idea to attack the permanent versus determinant problem
is to exhibit {\em representation theoretic obstructions}, which are defined as highest weights 
$\la\in S(\Per_{n,m})\setminus S(\Det_n)$. 
Unfortunately, no such obstructions could be found so far.
It is even unclear whether they exist. 
The goal of this note is to shed more light on the principal difficulty of finding such obstructions.

If we want to prove at least $\dc(\per_m) > m^2 +1$ via \eqref{eq:MS-attack}, then 
we may assume that $n\ge m^2+1$. 
Since any $\la$ in $S(\Per_{n,m})$ satisfies $|\la| \equiv 0 \bmod n$ and 
$\ell(\la) \le m^2 +1 \le n$, 
we have to look for such partitions~$\la$ outside of $S(\Det_n)$. 

Before stating our main result, we need to introduce the concept of saturation, whose 
relevance for geometric complexity was already pointed out in~\cite{gct6}, 
see also~\cite{bor:09}.  

For the following compare~\cite[\S7.3]{MillerSturmfels}. 
Let $S$ be a submonoid of a free abelian group $F$ and $A(S):=S-S$ the group generated by~$S$. 
We call $S$ {\em saturated} if
$$ 
 \forall x \in A(S)\ \forall k\in \N_{>0} : kx \in S \Rightarrow x \in S .
$$ 
The {\em saturation} $\Sat(S)$ of $S$ is defined as the smallest saturated submonoid of~$F$ containing~$S$. 
It can also be characterized as the intersection of $A(S)$ with the rational cone generated by $S$. 
The elements in $\Sat(S)\setminus S$ are called the {\em holes} of $S$. 
The reason for this naming becomes apparent from a simple example: 
take $S=\N^2\setminus \{(0,1),(1,0)\}$, which has $\N^2$ as its saturation.
Replacing $S$ by $\Sat(S)$ means filling up the holes $(0,1),(1,0)$. 
Generally, understanding monoids is difficult due to the presence of holes. 

Our main result explains some of the difficulty in finding representation theoretic obstructions. 

\begin{theorem}\label{th:main}
The saturation of the semigroup of representations $S(\Det_n)$ contains 
$\{\la \in \Lambda^+_G(\poly) \mid \ell(\la) \le n, |\la| \equiv 0 \bmod n\}$, 
provided $n>2$. Hence representation theoretic obstructions must be holes of $S(\Det_n)$. 
\end{theorem}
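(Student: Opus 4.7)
The plan is in three stages: first, reduce the problem to computing the weight monoid of the open orbit $G \cdot \det_n$ by exploiting the reductivity of the stabilizer; second, express the resulting multiplicities as Kronecker coefficients indexed by rectangular partitions and invoke a positivity result; third, lift the saturation from the orbit to its closure $\Det_n$.

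For the first two stages I would proceed as follows. Under $\C^{n^2} \cong \C^n \otimes \C^n$, the stabilizer $H = \mathrm{Stab}_G(\det_n)$ has connected component given by the image of $\Sl_n \times \Sl_n$ acting via two-sided matrix multiplication, augmented by the transpose involution and a finite center. In particular $H$ is reductive, so Matsushima's theorem gives that $G \cdot \det_n \cong G/H$ is affine, and algebraic Peter--Weyl yields $\Oh(G/H) = \bigoplus_\lambda V_G(\lambda)^* \otimes V_G(\lambda)^H$. Hence $\lambda \in S(G \cdot \det_n)$ iff $V_G(\lambda)^H \ne 0$. Decomposing $V_G(\lambda) = \mathbb{S}_\lambda(\C^n \otimes \C^n)$ under $\Gl_n \times \Gl_n$ via Kronecker coefficients, invariance under $\Sl_n \times \Sl_n$ forces both Schur indices to be rectangular, giving $\dim V_G(\lambda)^{H^0} = k(\lambda, (d^n), (d^n))$ with $d = |\lambda|/n$. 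I would then invoke a positivity/saturation statement for the rectangular Kronecker cone: for every $\lambda$ with $\ell(\lambda) \le n$ and $n \mid |\lambda|$, some multiple $N\lambda$ satisfies $k(N\lambda, ((Nd)^n), ((Nd)^n)) > 0$. This can be deduced via the semigroup structure of the Kronecker cone together with an explicit collection of positive rectangular triples spanning the target cone $\{\ell(\lambda) \le n,\ n \mid |\lambda|\}$ as a rational semigroup; the finite constraints coming from the transpose involution and the center impose only parity conditions that vanish on passing to saturation.

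The main obstacle is the final lifting: $\Sat(S(\Det_n)) \supseteq \Sat(S(G \cdot \det_n))$. The restriction $\Oh(\Det_n) \hookrightarrow \Oh(G \cdot \det_n)$ is injective but typically not surjective, so a highest-weight vector of weight $\lambda$ on the orbit may have poles on the boundary $\Det_n \setminus G \cdot \det_n$ and fail to extend. The naive remedy of multiplying by a power of a regular function vanishing on the boundary would shift the weight off the ray $\N\lambda$, unless the multiplier is itself $G$-invariant. The crux is therefore to produce a $G$-invariant regular function on $\Det_n$ whose vanishing locus contains the entire boundary, so that a sufficiently high power multiplied against an orbit-section of weight $\lambda$ yields a section of $\Oh(\Det_n)$ of weight $k\lambda$ for some $k \ge 1$. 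This might be accomplished either via a direct geometric construction exploiting the determinantal nature of the boundary (e.g.\ a polynomial detecting a drop in the rank of the parametrizing map), or via a general finiteness theorem for reductive group actions asserting that the rational cones generated by $S(\Det_n)$ and $S(G \cdot \det_n)$ coincide. The exclusion of $n=2$ in the statement is consistent with this, since for $n=2$ one has $\Det_2 = \Pol_2(\C^4)$ and the lifting step is vacuous.
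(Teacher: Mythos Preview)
Your first two stages are essentially the content of the paper's ``Related work'' paragraph on $S^o(\Det_n)$: the orbit $G\cdot\det_n$ is affine with reductive stabilizer, its weight monoid is governed by (symmetric) rectangular Kronecker coefficients, and the saturation input you want is exactly \cite{buci:09}. So Stages~1--2 are fine. The paper, however, does \emph{not} prove Theorem~\ref{th:main} this way, precisely because your Stage~3 is a genuine gap, and neither of your proposed remedies closes it.

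For the first remedy: there is no nonconstant $G=\Gl_{n^2}$-invariant regular function on $\Det_n$, since $\Det_n$ has a dense $G$-orbit. The natural semi-invariant cutting out the boundary (arising from the closedness of the $\Sl_{n^2}$-orbit) has weight a positive multiple of $\e_n=(1,\ldots,1)\in\Z^{n^2}$; multiplying by a power of it shifts $\la$ to $\la+s\e_n$, which leaves the ray $\N\la$ and, worse, has length $n^2$, destroying the hypothesis $\ell(\la)\le n$. This is exactly the phenomenon recorded in \cite[Prop.~4.4.1]{BLMW:11}. For the second remedy: there is no general theorem asserting $C_\Q(S(Z))=C_\Q(S(Z_o))$ for a $G$-stable open $Z_o\subseteq Z$; already $\C^*$ acting by scaling on $\C\supset\C^*$ gives cones $\Q_{\ge 0}\subsetneq\Q$. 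The paper's Remark~\ref{re:o-subset} yields equality of the \emph{groups} $A(S(Z))=A(S(Z_o))$ only, and Proposition~\ref{pro:Ssat-norm} gives equality of saturations specifically for the normalization map, where integrality of $\Oh(\tilde Z)$ over $\Oh(Z)$ is the crux; functions on an open subset need not be integral. (Your reading of the $n>2$ hypothesis is also inverted: if lifting were the only issue and it is vacuous for $n=2$, your argument would cover $n=2$, yet the theorem excludes it.)

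The paper circumvents all of this by going from \emph{below} rather than from above: it passes to the closed subvariety $\Chow_n\subseteq\Det_n$ (the orbit closure of $X_1\cdots X_n$), for which $S(\Chow_n)\subseteq S(\Det_n)$ and hence $\Sat(S(\Chow_n))\subseteq\Sat(S(\Det_n))$ are immediate. The normalization of $\Chow_n$ is explicitly $V^n\cq H_n$ with $\Oh(V^n\cq H_n)\simeq\bigoplus_k\Sym^n\Sym^k V^*$, and Proposition~\ref{pro:Ssat-norm} transfers the saturation computation to this normalization. The cone is then handled by the even-plethysm result of \cite{bci:10}, and the group by explicit plethysm witnesses (Proposition~\ref{pro:AS-Chow}); the $n>2$ restriction enters only in the latter.
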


\subsection{Related work}

Our work is closely related to a recent result by Shrawan Kumar~\cite{kumar:13}. 
A latin square of size~$n$ is an $n\times n$ matrix with entries from 
the set of symbols $\{1,\ldots,n\}$ such that in each row and in each column 
each symbol occurs exactly once. The sign of a latin square is defined as 
the product of the signs of all the row and column permutations. 
Depending on the sign, we can speak about even and odd latin squares. 
The Alon-Tarsi conjecture~\cite{AT:92} states that the number of even latin squares 
of size~$n$ is different from the number of odd  latin squares of size~$n$, provided $n$ is even.  
The Alon-Tarsi conjecture is known to be true if $n=p\pm 1$ where $p$ is a prime, 
cf.~\cite{Dri:98,Gly:10}.

\begin{theorem}[Kumar]\label{th:kumar-main}
Let $n$ be even. If the Alon-Tarsi conjecture for $n\times n$ latin squares holds, then 
$n\la \in S(\Det_n)$ for all partitions $\la$ such that $\ell(\la)\le n$. 
\end{theorem}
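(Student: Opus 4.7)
The plan is to construct a $G$-equivariant embedding of $V_G(n\lambda)^*$ into the coordinate ring $\Oh(\Pol_n(\C^{n^2}))$ whose image does not vanish at the point $\det_n$. Since $\det_n\in\Det_n$ and $\Oh(\Det_n)$ is the quotient of $\Oh(\Pol_n(\C^{n^2}))$ by the ideal of $\Det_n$, such an embedding descends to a nonzero copy of $V_G(n\lambda)^*$ in $\Oh(\Det_n)$, giving $n\lambda\in S(\Det_n)$. Concretely, it is enough to exhibit a highest weight vector $f$ of the appropriate weight in $\Sym^{|\lambda|}(\Pol_n(\C^{n^2})^*)$ satisfying $f(\det_n)\neq 0$.

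To build $f$, I would exploit the subgroup $K:=\Gl_n\times\Gl_n\hookrightarrow\Gl_{n^2}$ arising from the identification $\C^{n^2}=\C^n\otimes\C^n$. Under $K$ the Cauchy identity gives
\[
  \Pol_n(\C^{n^2}) \;=\; \Sym^n((\C^n\otimes\C^n)^*) \;=\; \bigoplus_{\mu\vdash n,\ \ell(\mu)\le n} V_{\Gl_n}(\mu)^*\boxtimes V_{\Gl_n}(\mu)^*,
\]
with $\det_n$ spanning the one-dimensional $\mu=(1^n)$ summand. Since $\ell(\lambda)\le n$, the Schur module $V_{\Gl_n}(n\lambda)$ is nonzero, and iterating Cauchy / plethysm on $\Sym^{|\lambda|}$ of the above, together with branching from $G$ down to $K$, produces, inside $\Sym^{|\lambda|}(\Pol_n(\C^{n^2})^*)$, a $G$-submodule of type $V_G(n\lambda)^*$ whose $K$-decomposition contains a distinguished copy of $V_{\Gl_n}(n\lambda)\boxtimes V_{\Gl_n}(n\lambda)$. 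A highest weight vector $f$ of this submodule is the candidate.

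The remaining step is $f(\det_n)\neq 0$. Since $\det_n$ lives in the $\mu=(1^n)$ line, $\det_n^{|\lambda|}$ lies in the $(|\lambda|^n)$-summand under $K$, and the pairing $\langle f,\det_n^{|\lambda|}\rangle$ is computed via the $V_{\Gl_n}(n\lambda)\boxtimes V_{\Gl_n}(n\lambda)$-projection. A Schur-functor / Young-symmetrizer bookkeeping (applied to a chosen pair of standard tableaux of shape $n\lambda$) reduces this to extracting the coefficient of the fully multilinear monomial $\prod_{i,j}x_{ij}$ in $\det_n^n$. Expanding $\det_n^n$ as a signed sum over $n$-tuples of permutations, that coefficient is precisely the Alon--Tarsi number $\sum_L\mathrm{sgn}(L)$, which the Alon--Tarsi hypothesis (using $n$ even) asserts to be nonzero. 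Hence $f(\det_n)\neq 0$ and $n\lambda\in S(\Det_n)$.

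The main obstacle is the last step: matching the abstract projection with the latin-square combinatorics, so that the signs from row- and column-antisymmetrization in $\det_n$ assemble into $\mathrm{sgn}(L)$ rather than cancel. This demands an explicit choice of highest weight vector (via Young symmetrizers) and careful sign bookkeeping; the parity assumption on $n$ is precisely what rules out the cancellation $\sum_L\mathrm{sgn}(L)=0$ that forces the odd case to be trivial. The hypothesis $\ell(\lambda)\le n$ is used throughout so that $V_{\Gl_n}(n\lambda)\neq 0$ and so that the $G$-to-$K$ branching yields a $V_G(n\lambda)^*$-isotypic containing a nonzero $V_{\Gl_n}(n\lambda)\boxtimes V_{\Gl_n}(n\lambda)$-piece.
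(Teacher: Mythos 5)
First, a caveat: the paper does not prove this statement at all --- it is Kumar's theorem, quoted from \cite{kumar:13}; the only information given about its proof is that it proceeds via the Chow variety $\Chow_n\subseteq\Det_n$, i.e., by evaluating highest weight vectors at the single point $X_{11}\cdots X_{nn}$ rather than at $\det_n$. Your plan of evaluating directly at $\det_n$ is therefore a different route, and it has gaps that I do not see how to close. (1) For your $f$ to pair nontrivially with $\det_n^{|\la|}$, equivalently for $f|_{\Det_n}\neq 0$, the restriction of $V_{\Gl_{n^2}}(n\la)$ to $K=\Gl_n\times\Gl_n$ must contain the type $V(n\times d)\boxtimes V(n\times d)$ with $d=|\la|$ --- not $V(n\la)\boxtimes V(n\la)$ as you write --- since $\det_n^{d}$ spans the $K$-line of type $(d,\ldots,d)\boxtimes(d,\ldots,d)$. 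By \eqref{eq:S-So} this is exactly the condition $\sk(n\la,n\times d,n\times d)>0$, and positivity of this symmetric Kronecker coefficient at stretching factor $n$ is itself open: the paper only cites \cite{buci:09} for positivity after some \emph{unspecified} stretching. So your construction quietly presupposes something essentially as hard as the statement being proved. (2) The reduction of $\langle f,\det_n^{|\la|}\rangle$ to the coefficient of $\prod_{i,j}x_{ij}$ in $\det_n^{\,n}$ is only degree-consistent when $|\la|=n$; for general $\la$ the evaluation is a $\la$-dependent tensor contraction, and no argument is given that it factors as a nonzero constant times the Alon--Tarsi sum rather than vanishing for unrelated reasons.

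Kumar's argument sidesteps both issues by working inside $\Chow_n$: since $S(\Chow_n)\subseteq S(\Det_n)$, it suffices to exhibit highest weight vectors of weight $n\la$ not vanishing at the monomial, where the computation can be pulled back along $\varphi_n\colon V^n\to\Chow_n$ and made explicit. There the Alon--Tarsi sum enters exactly once, as the value at $X_1\cdots X_n$ of the degree-$n$ fundamental $\Sl_n$-invariant of $\Sym^n\C^n$ (combinatorially, your coefficient of $\prod_{i,j}x_{ij}$ in $\det_n^{\,n}$), and general $\la$ is reached by multiplying highest weight vectors and using that $S$ of an irreducible variety is a monoid --- a multiplicativity that is available at the monomial but not, as far as anyone knows, at $\det_n$ itself. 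I would recommend rerouting your construction through $\Chow_n$ in this way; as it stands, the proposal does not constitute a proof.
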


The above two theorems complement each other. 
Theorem~\ref{th:main} is unconditional and also provides information about the group 
generated by $S(\Det_n)$. 
Theorem~\ref{th:kumar-main} is conditional, but tells us about the stretching factor $n$. 
The proofs of both theorems focus on the orbit closure of the monomial $X_1\cdots X_n$,
called the $n$th Chow variety, 
but otherwise proceed differently. 
In fact, our proof gives information on the stretching factor in terms of certain degrees related 
to the normalization the Chow variety.  
Unfortunately, we were so far unable to bound the latter in a reasonable way. 
 
We conclude by briefly mentioning further related results. 
In~\cite{BLMW:11}, extending \cite{gct2}, it was shown that 
\begin{equation}\label{eq:S-So}
  S(\Det_n) \subseteq S^o (\Det_n) := \big\{ \la \in \Lambda^+_{\Gl_{n^2}}(\poly) \mid \exists d\ |\la| = dn, \
 \sk(\la,n\times d,n\times d) >0 \big\}  .
\end{equation}
Here $\sk$ denotes the {\em symmetric Kronecker coefficient} and 
$n\times d = (d,\ldots,d)$ is the rectangular partition of size $dn$. 
In fact, $S^o(\Det_n)$ encodes the polynomial irreducible representations 
occurring in coordinate ring of the orbit of $\det_n$. 
The interest in $S^o(\Det_n)$ comes from its close relationship to $S(\Det_n)$.
As a consequence of the fact that the $\Sl_{n^2}$-orbit of $\det_n$ is closed, 
for all $\la\in S^o(\Det_n)$, there exists $s\in\N$ such that 
$\la + s\e_n \in S(\Det_n)$, where $\e_n :=(1,\ldots,1)$ with $n^2$ ones, 
see \cite[Prop.~4.4.1]{BLMW:11}. 

In \cite{buci:09} it was shown that for all $\la\in\Lambda^+_{\Gl_{n^2}}(\poly)$ 
there exists $k\ge 1$ such that $k\la \in S^o(\Det_n)$. 
This was the first formal evidence that finding representation theoretic obstructions 
for the determinant versus permanent problem might be very hard. 
We conjecture that $S^o(\Det_n)$ generates the group 
$\{\la\in\Z^{n^2} \mid |\la| \equiv 0 \bmod n \}$ if $n>2$, cf.~\cite[Conjecture~9.2.3]{ike:12b}. 
Jointly with~\cite{buci:09}, this would imply that the saturation of $S^o(\Det_n)$ 
equals $\Lambda^+_{\Gl_{n^2}}(\poly)$.

The advantage of working with $S^o(\Det_n)$ is that we have an algorithm for computing 
symmetric Kronecker coefficients, which allows us to perform experiments.
Note that $\la\in S(\Per_{n,m})$ implies that $V_G(\la)^*$ occurs in $\Oh(\Pol_n (\C^{n^2}))$, 
which means that $V_G(\la)$ occurs in the plethysm $\Sym^d\Sym^n\C^{n^2}$, 
where $|\la|=dn$. 
It is also known that~\cite{kadish-landsberg:14}
\begin{equation}\label{eq:LaKa}
 \la\in S(\Per_{n,m})  \Longrightarrow \la_1 \ge (1-\frac{m}{n}) \, |\la| .
\end{equation}
(In fact, one may replace here $\Per_{n,m}$ by the orbit closure of any $Y^{n-m} h$, 
where $h$ is a form in $m^2$ variables.) 

\begin{problem}\label{prob:uno}
Are there partitions $\la$ with $\ell(\la)\le n$ and $|\la| = dn$ such that 
$V_G(\la)$ occurs in the plethysm $\Sym^d\Sym^n\C^{n^2}$ and 
$\sk(\la,n\times d,n\times d) =0$?
\end{problem}

\begin{remark}
The assumption on the plethysm is relevant. For instance, 
$\sk(\la,n\times d,n\times d) =0$, 
for the hook $\la=(dn-1,1)$, cf.~\cite[Lemma~3]{pak-panova:13}. 
But $V_G(\la)$ does not occur in $\Sym^d\Sym^n\C^{n^2}$:  
in fact, one can show that no hook $\la$ with at least two rows occurs there.
\end{remark}

A computer search performed by the second author (see ~\cite[Appendix A.1]{ike:12b})  
showed that there are no such partitions in the following cases: 
$n=2, d\le 20$, or $n=3,d\le 15$, or $n=4,d\le 10$.  

\medskip

\noindent{\bf Acknowledgements.} 
The idea for this note originated at the conference on Differential Geometry and its Applications in Brno (2013), 
where Joseph Landsberg organized a session on Geometric Complexity Theory. 
We are grateful to Shrawan Kumar, Joseph Landsberg, and Nicolas Ressayre for discussions and valuable suggestions. 
We also thank the Simons Institute for the Theory of Computing for hospitality and financial support during 
the program ``Algorithms and Complexity in Algebraic Geometry'', where this work was completed. 

\section{Preliminaries}

We consider here the following situation. 
$G:=\Gl_{n}(\C)$ is the general linear group and $U$ denotes its subgroup 
consisting of the upper triangular matrices. 
$W$ is a finite dimensional $\C$-vector space 
and a rational $G$-module such that scalar multiples 
of the unit matrix act on $W$ by nontrivial homotheties, i.e., there is a 
nonzero $a\in\Z$ such that $tI_n\cdot w = t^a w$ for $t\in\C^*,w\in W$. 
Further, $Z$ denotes a $G$-invariant, irreducible, locally closed nonempty subset of $W$. 
Then $Z$ is closed under multiplication with scalars in $\C^*$ 
by our assumption on the $G$-action.

We consider the induced action of~$G$ 
on the ring $\Oh(Z)$ of regular functions on $Z$ defined via 
$(g F)(z):= F(g^{-1}z)$, where $g\in G$, $F\in\Oh(Z)$, and $z\in Z$. 
As in \eqref{def:S} we define the 
{\em monoid $S(Z)$ of representations of the $G$-variety $Z$}.
It is known that $S(Z)$ is a finitely generated submonoid of $\Lambda^+_G$, 
cf.~\cite{brion:87}. We shall interpret $\Lambda^+_G$ as a subset of $\Z^n$ and 
denote by $A(S(Z))$ the group generated by $S(Z)$.
Moreover, we denote by $C_\Q(S(Z))$ the rational cone generated by $S(Z)$, that is, 
$$
 C_\Q(S(Z)) := \{ n^{-1} \la \mid n\in\N_{> 0}, \la \in S(Z) \} .
$$
It is easy to check that the saturation of $S(Z)$ is obtained as 
\begin{equation}\label{eq:Sat-AC}
 \Sat(S(Z)) = A(S(Z)) \cap C_\Q(S(Z)). 
\end{equation}
We denote by $\Frac(R)$ the field of fractions of an 
integral ring $R$. 
We have an induced $G$-action on the field of fractions $\C(Z):=\Frac(\Oh(Z))$ 
and denote by $\C(Z)^U$ its subfield of $U$-invariants. 
Recall that a highest weight vector is a $U$-invariant weight vector.
The following lemma is well known, but we include its proof for 
completeness.  

\begin{lemma}\label{le:FracU}
We have $\Frac (\Oh(Z)^U) = \C(Z)^U$. 
Moreover, for a highest weight vector $f\in \C(Z)^U$, 
there exist highest weight vectors 
$p,q\in\Oh(Z)^U$ such that $f=p/q$.
\end{lemma}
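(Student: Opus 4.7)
The inclusion $\Frac(\Oh(Z)^U) \subseteq \C(Z)^U$ is immediate: any quotient of two $U$-invariants is $U$-invariant. For the reverse inclusion, the plan is the classical ``ideal of denominators'' trick. Given a nonzero $f \in \C(Z)^U$, write $f = p_0/q_0$ with $p_0, q_0 \in \Oh(Z)$ and introduce
$$
J := \{q \in \Oh(Z) : qf \in \Oh(Z)\},
$$
a nonzero ideal of $\Oh(Z)$. A one-line character computation using $u^{-1}f = f$ shows that $J$ is $U$-stable: $(uq)\, f = u(q \cdot u^{-1}f) = u(qf) \in \Oh(Z)$. Local finiteness of the $G$-action on $\Oh(Z)$ guarantees that the $U$-orbit of any $0 \neq q_0 \in J$ spans a finite-dimensional $U$-stable subspace $V \subseteq J$. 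Since $U$ is unipotent, Engel's theorem supplies a nonzero $U$-fixed vector $q \in V$, and then $p := qf \in \Oh(Z)$ is also $U$-invariant (being a product of two $U$-invariants in $\C(Z)$ that happens to lie in $\Oh(Z)$), so $f = p/q \in \Frac(\Oh(Z)^U)$.

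For the refinement to highest weight vectors, I would run the very same argument, but with the Borel subgroup $B = TU$ in place of $U$. If $f$ is a highest weight vector of weight $\lambda$, then for $t \in T$ and $q \in J$ one has $(tq)\, f = t(q \cdot t^{-1}f) = \chi_\lambda(t)^{-1}\, t(qf) \in \Oh(Z)$, so $J$ is $B$-stable. Taking a nonzero $q_0 \in J$, the finite-dimensional $B$-submodule of $J$ it generates contains, by the Lie--Kolchin theorem, a common $B$-eigenvector $q$, which is a highest weight vector of some weight $\mu$. Setting $p := qf$ yields a $U$-invariant $T$-eigenvector of weight $\mu+\lambda$, i.e., another highest weight vector, and $f = p/q$ is the desired presentation.

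I do not anticipate a serious obstacle; this is standard. The only points requiring a line of verification are the $B$-stability of $J$ (a short character bookkeeping) and the appeal to local finiteness of the $G$-action on $\Oh(Z)$, which confines the argument to a finite-dimensional ambient space where Engel's theorem and Lie--Kolchin apply.
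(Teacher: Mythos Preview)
Your argument is correct and follows essentially the same route as the paper: both use the ideal of denominators $J=\{q\in\Oh(Z)\mid qf\in\Oh(Z)\}$, observe that it contains a nonzero $U$-fixed (resp.\ $B$-eigen) vector $q$, and set $p:=qf$. The only difference is cosmetic: the paper compresses your Engel/Lie--Kolchin step into a one-line citation of \cite[\S17.5]{humphreys:75} for $J\ne 0\Rightarrow J^U\ne 0$, and for the highest weight refinement simply picks $q$ to be a weight vector inside $J^U$ (using that $T$ normalizes $U$), whereas you phrase it as a direct appeal to Lie--Kolchin on a $B$-stable subspace of $J$.
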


\begin{proof} 
The inclusion $\Frac (\Oh(Z)^U) \subseteq\C(Z)^U$ is obvious. 
Now let $f\in \C(Z)^U$ and consider the ideal
$J := \{ q \in \Oh(Z) \mid qf \in \Oh(Z) \}$ of $\Oh(Z)$. 
Since $J\ne 0$ we have $J^U\ne 0$, cf.~\cite[\S17.5]{humphreys:75}.
Choose a nonzero $q \in J^U$. Then 
$p:= qf \in \Oh(Z)^U$ and $ f=p/q$, 
hence $f\in\Frac(\Oh(Z)^U)$. 

If, additionally, $f\in \C(Z)^U$ is a weight vector, 
we can argue as before, choosing $q\in J^U$ as a highest weight vector.
Then $p:= qf$ is a highest weight vector in $\Oh(Z)$.
The assertion follows. 
\end{proof}

\begin{remark}\label{re:o-subset}
If $Z_o$ is a nonempty $G$-invariant open subset  of $Z$, then
$S(Z) \subseteq S(Z_o)$ and $A(S(Z)) = A(S(Z_o))$. 
This follows immediately from Lemma~\ref{le:FracU}. 
\end{remark}

%

Suppose now that $Z$ is a closed subset of $W$, hence an affine variety. 
Then we have an induced $G$-action on the normalization $\tilde{Z}$ and 
the canonical map $\pi\colon\tilde{Z} \to Z$ is $G$-invariant. 
Indeed, the integral closure $R$ of $\Oh(Z)$ in $\C(Z)$ is $G$-invariant 
and $\pi\colon\tilde{Z} \to Z$ is defined by $\Oh(Z)\hookrightarrow R$.
By construction, we can identify $\C(\tilde{Z})$ with $\C(Z)$. 
Note that $S(Z)\subseteq S(\tilde{Z})$ since $\pi$ is surjective.

\begin{proposition}\label{pro:Ssat-norm}
We have $A(S(\tilde{Z})) = A(S(Z))$ and $\Sat(S(\tilde{Z})) = \Sat(S(Z))$.
More precisely, assume that $\Oh(\tilde{Z})$ is generated as an $\Oh(Z)$-module by $e$ elements. Then for all $\la\in S(\tilde{Z})$, there is some $i<e$ such that $(e-i)\cdot\la \in S(Z)$.
\end{proposition}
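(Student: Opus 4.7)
My plan is to reduce both global claims to the quantitative statement at the end, and then prove the latter via the determinant trick. The key observation up front is that, by Lemma~\ref{le:FracU}, the group $A(S(Z))$ coincides with the set of $T$-weights of nonzero elements of $\C(Z)^U$, and similarly $A(S(\tilde Z))$ coincides with the set of $T$-weights of nonzero elements of $\C(\tilde Z)^U$; since $\C(\tilde Z) = \C(Z)$ (the normalization morphism is birational), these match and we get $A(S(\tilde Z)) = A(S(Z))$ immediately. Once the quantitative claim that every $\la \in S(\tilde Z)$ has some $k\la \in S(Z)$ with $1\le k\le e$ is in place, the saturation equality follows at once from the identity \eqref{eq:Sat-AC}.

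For the quantitative claim, fix $\la \in S(\tilde Z)$ and a nonzero highest weight vector $f\in\Oh(\tilde Z)^U$ of $T$-weight $\la$. The plan is to apply the determinant trick to the $\Oh(Z)$-module $\Oh(\tilde Z)$: since it is generated by $e$ elements, multiplication by $f$ is an $\Oh(Z)$-linear endomorphism and, by Cayley-Hamilton applied to its matrix in any $e$-element generating set, it satisfies a monic equation
\[ f^e + a_{e-1}f^{e-1} + \cdots + a_0 = 0, \qquad a_i \in \Oh(Z). \]
Because $\Oh(\tilde Z)$ is $T$-graded and $f^i$ has $T$-weight $i\la$, I next project this relation onto its $T$-weight $e\la$ component, obtaining
\[ f^e + \sum_{i<e} b_i f^i = 0, \qquad b_i := [a_i]_{(e-i)\la} \in \Oh(Z)_{(e-i)\la}. \]
Since $\Oh(\tilde Z)$ is an integral domain and $f\neq 0$, at least one $b_i$ must be nonzero; write $k := e-i$ for any such index, so $1\le k\le e$.

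The main step remaining is to arrange that at least one nonzero $b_i$ lies in $\Oh(Z)^U$, so it becomes a genuine highest weight vector of weight $(e-i)\la$ in $\Oh(Z)$, yielding $(e-i)\la \in S(Z)$. My plan is to choose the $\Oh(Z)$-module generators $g_j$ of $\Oh(\tilde Z)$ to be $T$-weight vectors of weights $\mu_j$, which forces the matrix entries $m_{jk}$ of multiplication by $f$ to sit in $\Oh(Z)_{\la+\mu_j-\mu_k}$; then every coefficient of the characteristic polynomial is automatically $T$-homogeneous of the correct weight by summing over permutations. To upgrade this to $U$-invariance, I will exploit that multiplication by $f$ commutes with the $U$-action (because $uf=f$ for all $u\in U$) and track how this propagates through the determinant, using that $U$ only shifts $T$-weights upward so the weight-$(e-i)\la$ pieces are preserved by passage to $U$-invariants. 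I expect this last step to be the main technical obstacle: because $U$ is unipotent there is no Reynolds operator to average with, so the $U$-invariance of the projected coefficients cannot be obtained by averaging and must instead be extracted by direct bookkeeping on the $T$-grading and on the unipotent shift of weights.
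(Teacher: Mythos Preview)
Your reduction of the two global claims to the quantitative one, and your proof that $A(S(\tilde Z))=A(S(Z))$ via $\C(\tilde Z)=\C(Z)$ together with Lemma~\ref{le:FracU}, are correct and coincide with the paper's argument. Setting up an integral equation for $f$ and projecting to the $T$-weight $e\la$ component is likewise the same idea; your variant of producing the weight-homogeneous coefficients directly by choosing $T$-homogeneous generators $g_j$ (so that each principal minor of $(m_{jk})$ has weight $|S|\la$) is a pleasant alternative to the paper's projection step.

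The genuine gap is exactly where you flag it. Your plan to extract $U$-invariance by ``tracking through the determinant'' does not go through: the $g_j$ are only generators, not a basis, so the entries $m_{jk}$---and hence the characteristic polynomial---depend on non-canonical choices that $U$ does not respect; the commutation of multiplication-by-$f$ with $U$ does not transfer to the matrix $M$. Observing that $U$ shifts weights only upward just reproduces the same weight-projected equation and says nothing about $ub_i=b_i$. The paper sidesteps all of this by taking instead an integral equation of \emph{smallest} degree (which is at most $e$). After weight projection one still has a monic $\Oh(Z)$-equation of that minimal degree; applying $u\in U$ and using $uf=f$ produces another monic $\Oh(Z)$-equation of the same minimal degree, and uniqueness of the minimal polynomial forces $u\,a_{i,(e-i)\la}=a_{i,(e-i)\la}$. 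So the missing idea is: replace the Cayley--Hamilton relation by the minimal integral equation, whose uniqueness hands you $U$-invariance for free.
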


\begin{proof}
Let $\la\in S(\tilde{Z})$ and 
$f\in\Oh(\tilde{Z})^U$ be a highest weight vector of weight~$\la$. 
Then $f\in\C(Z)^U$ and Lemma~\ref{le:FracU} shows the existence of 
highest weight vectors $p,q\in\Oh(Z)^U$, say with the weights $\mu,\nu \in S(Z)$, 
respectively, such that $f=p/q$.
Therefore $\la =\mu - \nu \in A(Z)$.
This shows the equality for the groups.

Due to \eqref{eq:Sat-AC}, it suffices to prove that  
$C_\Q(S(Z))=C_\Q(S(\tilde{Z}))$.
 Suppose $f\in\Oh(\tilde{Z})$ is a highest weight vector of weight $\la$.
Since $f$ is integral over $\Oh(Z)$, there are $e\in\N_{\ge 1}$ and 
$a_0,\ldots,a_{e-1}\in\Oh(Z)$ such that
such that 
\begin{equation}\label{eq:norm-equ}
  \mbox{ $f^e + \sum_{i=0}^{e-1} a_{i} f^{i} = 0$.}
\end{equation}
We assume that the degree $e$ is the smallest possible.

Note that $e$ is at most the size of a generating set of $\Oh(\tilde{Z})$ as an $\Oh(Z)$-module, as follows from the classical theory of integral extensions, see \cite[Prop.~5.1 and the proof of Prop.~2.4]{am:69}.

Consider the weight decomposition 
$a_i = \sum_\mu a_{i,\mu}$ 
of $a_i$, where $a_{i,\mu}$ has the weight~$\mu$. 
Then $a_{i,\mu} f^i$ has the weight $\mu + i\la$. 
Moreover,  $f^e$ has the weight $e\la$. 
Since the component of weight $e\la$ in \eqref{eq:norm-equ} must vanish, 
we have 
$$
 \mbox{$f^e + \sum_{i=0}^{e-1} a_{i,(e-i)\la} f^{i} = 0$.} 
$$
Since the degree $e$ is the smallest possible, 
the above is the minimal polynomial of $f$. 
Applying $u\in U$ to  the above equation and using $u f=f$, we get 
$
 f^e + \sum_{i=0}^{e-1} u a_{i,(e-i)\la} f^{i}  = 0 .
$
The uniqueness of the minimal polynomial implies that $u a_{i,(e-i)\la} = a_{i,(e-i)\la}$ for all~$i$. 
Hence $a_{i,(e-i)\la}$ is a highest weight vector, provided it is nonzero. 
Since there exists~$i<e$ with $a_{i,(e-i)\la}\ne 0$, we see that 
$(e-i)\la \in S(\Oh(Z))$ for this~$i$. 
We conclude that $\la\in C_\Q(\Oh(Z))$.
\end{proof}

%
%
%
%
%
%

\begin{example} 
If $G=(\C^*)^n$ we can identify $\Lambda^+_G$ with $\Z^n$. 
Let $S\subseteq\Z^n$ be a finitely generated submonoid and 
consider the finitely generated subalgebra 
$\C[S] := \oplus_{s\in S} \, \C X_1^{s_1}\cdots X_n^{s_n}$ 
of $\C[X_1^{\pm 1},\ldots,X_n^{\pm 1}]$.  
If we interpret $\C[S]$ as the coordinate ring of an affine variety $Z$, 
then we have a $(\C^*)^n$-action on $Z$ and 
$S(Z)$ can be identified with $S$
($Z$ is called  a toric variety).  
It is known that $\C[\Sat(S)]$ equals the integral closure of $\C[S]$ 
in $\Frac(\C[S])$, cf.~\cite[Prop.~7.25, p.~140]{MillerSturmfels}. 
Thus the affine variety corresponding to $\Sat(S)$ equals the 
normalization of $Z$. This illustrates 
Proposition~\ref{pro:Ssat-norm}
in the special case of toric varieties.
\end{example}

\section{Proofs}
\label{se:proofs}

Write $V:=\C^n$, $G:=\Gl(V)$ and consider $W := \Sym^n V$ with is natural $G$-action.
We can interpret $W$ as the space $\Pol_n V^*$ of degree~$n$ forms 
on $V^*$. Note that when identifying $V=\C^n$ with $V^*$, the following 
$G$-action on $\Pol_n (\C^n)$ results: 
$(gw)(x):=w(g^T x)$, where $g\in G, w\in \Pol_n (\C^n), x\in \C^n$. 
(This point is sometimes confusing.)  
This is the action considered in the introduction.
Note that 
$\Oh(W)_d \simeq \Sym^d \Sym^n V^*$
as $G$-modules. 

Let $X_1,\ldots,X_n$ be a basis of $V$ and consider $w_n := X_1\cdots X_n\in W$.  
Clearly, the orbit of~$w_n$ consists of all symmetric products $v_1\cdots v_n$ of $n$ linearly independent vectors~$v_i$.  
We define  
$$
 \Chow_n :=  \big\{v_1\cdots v_n \mid v_1,\ldots,v_n \in V \big\} \subseteq W .
$$
$\Chow_n$ is a special case of a 
{\em Chow variety}, see \cite{GKZ:94}. 
We note that $\Chow_2 = \Sym^2\C^2$.  

\begin{lemma}\label{le:clo-chow}
$\Chow_n$ equals the $G$-orbit closure of $w_n$. 
\end{lemma}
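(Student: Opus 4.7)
The plan is to show $\ol{G\cdot w_n} = \Chow_n$ by establishing three facts. First, that the orbit $G\cdot w_n$ is contained in $\Chow_n$, which is clear since $g\cdot w_n = (gX_1)\cdots(gX_n)$ is a product of $n$ vectors. Second, that $G\cdot w_n$ is dense in $\Chow_n$. Third, that $\Chow_n$ itself is Zariski-closed in $W$. Together these give $\Chow_n\subseteq\ol{G\cdot w_n}\subseteq\ol{\Chow_n}=\Chow_n$, hence equality.

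To unpack the first two, I would begin by identifying the orbit explicitly: as $g$ ranges over $G=\Gl(V)$, the tuple $(gX_1,\ldots,gX_n)$ ranges exactly over bases of $V$, so $G\cdot w_n$ equals the set of products of $n$ linearly independent vectors. For density, given an arbitrary $v_1\cdots v_n\in\Chow_n$ with the $v_i$ possibly linearly dependent, the set of bases of $V$ is Zariski-open and dense in $V^n$, so one can pick a one-parameter family $\e\mapsto(v_1(\e),\ldots,v_n(\e))$ of bases with $v_i(0)=v_i$. Each product $v_1(\e)\cdots v_n(\e)$ then lies in $G\cdot w_n$, and by continuity of multiplication in $\Sym^n V$ it converges to $v_1\cdots v_n$ as $\e\to 0$.

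The main obstacle is the closedness of $\Chow_n$, since the image of the polynomial multiplication map $\phi\colon V^n\to W$, $(v_1,\ldots,v_n)\mapsto v_1\cdots v_n$, is \emph{a priori} only constructible by Chevalley's theorem. I would pass to projective space: because $\Sym V$ is a polynomial ring, hence an integral domain, $v_1\cdots v_n\ne 0$ whenever every $v_i\ne 0$, so $\phi$ descends to a well-defined morphism $\bar\phi\colon\P(V)^n\to\P(W)$. Since $\P(V)^n$ is projective (via the Segre embedding), the image $Y:=\bar\phi(\P(V)^n)$ is closed in $\P(W)$ by the main theorem of elimination theory. Pulling back via $W\setminus\{0\}\to\P(W)$ exhibits $\Chow_n\setminus\{0\}$ as a closed subset of $W\setminus\{0\}$, and since $\Chow_n$ is a $\C^*$-stable cone, the only possible additional limit point in $W$ is the origin, which already lies in $\Chow_n$ (as $0=0\cdot X_2\cdots X_n$). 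Hence $\Chow_n$ is closed in $W$, completing the proof.
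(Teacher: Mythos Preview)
Your proof is correct but takes a genuinely different route from the paper's. The paper shows $\Chow_n\subseteq\ol{G w_n}$ in one line (since $\Gl_n$ is dense in $\C^{n\times n}$) and then attacks the reverse inclusion $\ol{G w_n}\subseteq\Chow_n$ directly: given a nonzero limit $f$ of a sequence $c_k\,v_1^{(k)}\cdots v_n^{(k)}$ in $G w_n$, it normalizes each $v_i^{(k)}$ to lie on the unit sphere, invokes compactness of the sphere to pass to a convergent subsequence, and concludes that $f$ is itself a product of vectors. You instead establish that $\Chow_n$ is Zariski-closed in $W$ by passing to $\P(V)^n\to\P(W)$ and invoking properness, and then the reverse inclusion is immediate from $G w_n\subseteq\Chow_n$. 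The paper's argument is shorter and uses only elementary compactness, but it lives in the Euclidean topology and tacitly relies on the coincidence of Euclidean and Zariski orbit closures. Your argument is purely algebro-geometric, works over any algebraically closed field, and yields the Zariski-closedness of $\Chow_n$ as an explicit byproduct.
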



\begin{proof}
We have $\Chow_n\subseteq \ol{G w_n}$ since $G$ is dense in $\C^{n\times n}$. 
For the converse inclusion, suppose the nonzero $f\in \ol{G w_n}$ is the limit of 
a sequence $(c_k v_1^{(k)}\cdots v_n^{(k)})$, where $c_k\in\C^*$ and $v_i^{(k)} \in V$ 
with $\|v_i^{(k)}\| =1$. By compactness of the unit sphere in $V$ we may assume that, 
after going over to a subsequence, that $(v_i^{(k)})$ is convergent, for $i=1,\ldots,n$. 
Say, $\tilde{v}_i = \lim_{k\to\infty} v_i^{(k)}$. Then $(v_1^{(k)}\cdots v_n^{(k)})$ converges 
to $\tilde{v}_1\cdots \tilde{v}_n$, which is nonzero. It follows easily that $(c_k)$ converges 
to some $c$ and $f=c\tilde{v}_1\cdots \tilde{v}_n \in \Chow_n$.
\end{proof}

When we identify $X_i$ with the variable $X_{ii}$, then $\Chow_n$ is
contained in the $\Gl_{n^2}$-orbit closure $\Det_n$.  
Indeed, let $\varepsilon\in\C^*$. The linear substitution 
$X_{ii} \mapsto X_{ii}$, $X_{ij} \mapsto \varepsilon X_{ij}$ for $i\ne j$, maps $\det_n$ to 
$X_{11}\cdots X_{nn} + \varepsilon p$ for some polynomial $p$. We obtain 
$X_{11}\cdots X_{nn}$ in the limit when $\varepsilon$ goes to zero.

The basic strategy, as in Kumar~\cite{kumar:13}, is to replace $\Det_n$ by the considerably simpler $\Chow_n$ and 
to exhibit elements in the monoid of representations of the latter. More specifically, 
we have $S(\Chow_n)\subseteq S(\Det_n)$ and hence $\Sat(S(\Chow_n))\subseteq \Sat(S(\Det_n))$.
Our main Theorem~\ref{th:main} is an immediate consequence of the following result.

\begin{theorem}\label{th:main-Chow}
We have 
$\Sat(S(\Chow_n)) = \big\{ \la\in\Lambda^+_{\Gl_n}(\poly) \mid |\la| \equiv 0 \bmod n \big\}$, provided $n>2$. 
\end{theorem}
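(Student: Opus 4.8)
The plan is to prove the two inclusions of the claimed equality separately, and the harder one — that every partition $\la$ with $\ell(\la)\le n$ and $n\mid |\la|$ lies in $\Sat(S(\Chow_n))$ — by passing to the normalization $\widetilde{\Chow_n}$ and applying Proposition~\ref{pro:Ssat-norm}, which tells us that $\Sat(S(\Chow_n))=\Sat(S(\widetilde{\Chow_n}))$ and that it suffices to work with $S(\widetilde{\Chow_n})$. The inclusion $\Sat(S(\Chow_n))\subseteq\{\la\mid |\la|\equiv 0\bmod n\}$ is the easy direction: it follows from the remarks in the introduction that any $\la\in S(Z)$ for an orbit closure $Z$ of a degree-$n$ form satisfies $|\la|\equiv 0\bmod n$, and saturation only intersects with the group $A(S(\Chow_n))$, which still lies inside $\{|\la|\equiv 0\bmod n\}$; combined with $\ell(\la)\le n$ automatically (only $n$ variables), this pins the cone down.

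For the reverse inclusion the key is to understand $\Oh(\Chow_n)^U$, or rather $\C(\Chow_n)^U=\C(\Chow_n)^{U}$, well enough to exhibit highest weight vectors of many weights. First I would use the parametrization $\phi\colon V^{\times n}\to\Chow_n$, $(v_1,\dots,v_n)\mapsto v_1\cdots v_n$, which exhibits $\Chow_n$ as the image of $\Sym^n$ applied to $n$ copies of $V$, i.e.\ essentially $(V^{\oplus n})/\mathfrak{S}_n$ up to the scaling action; concretely $\C[\Chow_n]$ embeds into $\big(\Sym(V^{*})^{\otimes n}\big)^{\mathfrak{S}_n}$. The $\Gl_n$-highest weight vectors in a tensor power $\Sym(V^*)^{\otimes n}$ are products of minors of the generic $n\times n$ matrix $(x_{ij})$ whose rows are the coordinates of $v_1,\dots,v_n$; taking products of the maximal minor (the determinant) in various ways, and then symmetrizing over $\mathfrak{S}_n$, one produces, for every partition $\la$ with $\ell(\la)\le n$ and $|\la|$ a multiple of $n$, a nonzero $\mathfrak{S}_n$-invariant highest weight vector of weight $\la$ in the function field $\C(V^{\times n})^{\mathfrak{S}_n}\cong\C(\Chow_n)$. (Here the Alon--Tarsi-type phenomenon explains why one cannot always achieve this already in the polynomial ring $\Oh(\Chow_n)$ itself with stretching factor exactly $n$, but for the saturation — i.e.\ the cone and the group — rational highest weight vectors suffice by Lemma~\ref{le:FracU} and Proposition~\ref{pro:Ssat-norm}.) The point is that the first fundamental theorem for $\Sl_n$ (or for $\Gl_n$ acting on tuples of vectors) says that the $U$-invariants of $\C[V^{\times n}]$ are generated by the $j\times j$ top-justified minors, so every dominant weight with at most $n$ rows is realized, and the divisibility by $n$ comes from the single homogeneity constraint imposed by the scaling $tI_n$ acting as $t^n$.

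The main obstacle I expect is the bookkeeping to ensure the constructed invariants survive the passage to $\mathfrak{S}_n$-invariants and are genuinely nonzero as elements of $\C(\Chow_n)^U$ — i.e.\ that symmetrizing a product of minors over the $n!$ relabelings of the $v_i$ does not collapse to zero. This is precisely where $n>2$ enters, and it is morally the appearance (in the function field rather than the coordinate ring) of a nonvanishing analogous to the Alon--Tarsi sign-sum; one needs either a direct combinatorial argument that some symmetrization is nonzero for every target weight, or a dimension/transcendence-degree count showing the $U$-invariant subfield $\C(\Chow_n)^U$ has the maximal possible function-theoretic size, so that its weight monoid spans the full cone $\{\la\mid\ell(\la)\le n\}$ after intersecting with the lattice $\{|\la|\equiv 0\bmod n\}$. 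Once that nonvanishing is secured, Lemma~\ref{le:FracU} upgrades rational highest weight vectors to ratios $p/q$ of polynomial ones, so $\la=\mu-\nu$ with $\mu,\nu\in S(\Chow_n)$, giving $\la\in A(S(\Chow_n))$; and a parallel argument with a suitable power $k\la$ puts $k\la\in S(\Chow_n)$, placing $\la$ in the cone. Together these yield $\la\in A(S(\Chow_n))\cap C_\Q(S(\Chow_n))=\Sat(S(\Chow_n))$ by \eqref{eq:Sat-AC}, completing the reverse inclusion and hence the theorem.
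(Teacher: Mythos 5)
Your skeleton is the right one --- pass to the normalization, use Proposition~\ref{pro:Ssat-norm} and \eqref{eq:Sat-AC}, and compute the cone and the group separately --- and this is indeed how the paper proceeds. But the core of the argument is missing. First, a technical slip: the generic fibers of $\varphi_n\colon V^{\times n}\to\Chow_n$ are orbits of $H_n=T_n\rtimes S_n$, not of $\mathfrak{S}_n$ alone; the torus $T_n$ of scalings with product $1$ must also be quotiented out. Its effect is to force equal multidegree in the $n$ tensor factors, so that $\Oh(V^n)^{H_n}\cong\bigoplus_k\Sym^n\Sym^kV^*$ (Lemma~\ref{le:CR-Chow}). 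Consequently the question of which $\la$ occur is exactly the plethysm positivity question for $\Sym^n\Sym^kV$, not the (trivially answered) question of which dominant weights occur in $\C[V^{\times n}]^U$ before symmetrization.

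Second, and decisively: your claim that symmetrizing products of minors does not collapse to zero is precisely the hard step, and you leave it open (``one needs either a direct combinatorial argument \dots or a dimension/transcendence-degree count''). It cannot be finessed, because many weights genuinely fail to occur: no hook with at least two rows occurs in the relevant plethysms, $(3,3)$ does not occur in $\Sym^2\Sym^3V$ (which is exactly where the hypothesis $n>2$ enters), and the paper exhibits infinitely many holes of $S(\Chow_3)$. The paper fills this gap with two concrete external inputs. For the cone, it invokes the theorem of \cite{bci:10} that $V_{\Gl_n}(2\la)$ always occurs in $\Sym^n\Sym^{2k}\C^n$, so $2\la\in S(V^n\cq H_n)$ for every $\la$ with $\ell(\la)\le n$ and $|\la|=nk$ (stretching factor $2$ at the level of the normalization). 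For the group, it uses the explicit occurrences of $(2n-2,2,0,\dots,0)$ in $\Sym^n\Sym^2V$ and of $(3n-3,3,0,\dots,0)$ in $\Sym^n\Sym^3V$, whose difference $(n-1,1,0,\dots,0)$ lies in the group, together with the partitions $\la^{(k)}$ of Lemma~\ref{pro:LamdaSym} (built from \cite[Cor.~6.4]{mami:15}) satisfying $\la^{(k)}_k=1$ and $\la^{(k)}_i=0$ for $i>k$; these generate the full lattice $\{\la\in\Z^n\mid\sum_i\la_i\equiv 0\bmod n\}$. Without a substitute for these occurrence results, your proof does not go through.
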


According to Proposition~\ref{pro:Ssat-norm}, for proving this, we may replace $\Chow_n$ by its normalization. 
It is crucial that the latter has an explicit description, that we describe next. 

For the following arguments compare~\cite{brion:93} and \cite{lands-survey:13}. 
The symmetric group $S_n$ operates on the group 
$T_n:=\{(t_1,\ldots,t_n)\in \C^n \mid t_1\cdots t_n =1 \}$ 
by permutation. The corresponding semidirect product $H_n :=T_n \rtimes S_n$ acts on $V^{n}$ 
by scaling and permutation. Note that this action commutes with the $G$-action. 
Consider the product map 
$$
 \varphi_n\colon V^{n} \to \Chow_n,\  (v_1,\ldots,v_n) \mapsto v_1\cdots v_n ,
$$
which is surjective and $G$-equivariant. 
Clearly, $\varphi_n$ is invariant on $H_n$-orbits. 
Moreover, the fiber of a nonzero $w\in\Chow_n$ is an $H_n$-orbit.
This easily follows from the uniqueness of polynomial factorization (interpreting the $v_i$ 
as linear forms). 

The group $G$ contains the subgroup $\{ t\cdot\id_V \mid t \in \C^\times \}\cong \C^\times$ and this $\C^\times$-action induces a natural grading on the coordinate rings $\Oh(V^n)$ and $\Oh(\Chow_n)$. Since $\varphi_n$ is $G$-equivariant, the corresponding comorphism $\varphi_n^\ast\colon\Oh(\Chow_n)\to\Oh(V^n)$ is in particular a homomorphism of graded $\C$-algebras. However, the $\C^\times$-action on $\Oh(\Chow_n)$ is not the canonical one because $t\in\C^\times$ acts by multiplication with the scalar $t^n$. A homogeneous element of degree $kn$ in $\Oh(\Chow_n)$ is the restriction of a $k$-form on $W$.

The categorical quotient $V^{n} \cq H_n$ is defined as the affine variety that has as 
its coordinate ring the ring of $H_n$-invariants $\Oh(V^n)^{H_n}$, 
which is finitely generated and graded since $H_n$ is reductive, cf.~\cite{kraf:84}.
The inclusion $\Oh(V^n)^{H_n}\hookrightarrow \Oh(V^n)$ defines a 
$G$-equivariant, surjective morphism 
$\pi\colon V^n \to V^{n} \cq H_n$. 
Since $V^{n}$ is normal, the quotient $V^{n} \cq H_n$ is normal as well, 
see~\cite[p.~45]{dolgachev:03} for the easy proof. 
The map $\varphi_n$ factors through a $G$-equivariant morphism 
\begin{equation}\label{eq:normal-chow}
 \psi_n\colon V^{n} \cq H_n \to \Chow_n ,
\end{equation}
due to the universal property of categorical quotients. Moreover, by construction, the fibers of~$\psi_n$ over a nonzero $w\in \Chow_n$ consist of just one element. The action of $H_n$ on $\Oh(V^n)$ is linear, therefore it respects the grading. Hence, the comorphism $\psi_n^\ast\colon\Oh(\Chow_n)\to\Oh(V^n)^{H_n}$ is again a homomorphism of graded $\C$-algebras. 

The following is shown in~\cite[Prop., p.~351]{brion:93}. 

\begin{lemma}\label{le:psi-finite} 
The morphism $\psi_n$ is finite. \qedsymbol 
\end{lemma}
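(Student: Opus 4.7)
The plan is to reduce finiteness of $\psi_n$ to a set-theoretic computation of the fiber over the origin, by exploiting the natural $\C^*$-grading on both sides. Set $A := \Oh(\Chow_n)$ and $B := \Oh(V^n \cq H_n) = \Oh(V^n)^{H_n}$. Both are finitely generated nonnegatively graded $\C$-algebras with $A_0 = B_0 = \C$, the grading being induced by the homothety $\C^*$-action from the center of $G$; since $\psi_n$ is $\C^*$-equivariant, the comorphism $\psi_n^*\colon A \to B$ is graded, and it is injective because $\varphi_n = \psi_n\circ\pi$ is surjective (hence so is $\psi_n$) and $\Chow_n$ is irreducible. For such a graded inclusion, the standard graded Nakayama argument yields that $B$ is finite over $A$ iff $B/A_+B$ is finite-dimensional over $\C$, iff $\sqrt{A_+B} = B_+$, iff $\psi_n^{-1}(0) = \{\pi(0)\}$ set-theoretically. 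Hence it suffices to compute this single fiber.

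Any point of $\psi_n^{-1}(0)$ has the form $\pi(v_1,\ldots,v_n)$ with $v_1\cdots v_n = 0$ in $\Sym^n V$. Interpreting the $v_i$ as linear forms on $V^*$, unique factorization forces $v_i = 0$ for some $i$. Since $H_n$ is reductive, the categorical quotient $\pi$ satisfies $\pi(x) = \pi(0)$ if and only if $\overline{H_n\cdot x}$ contains the origin (see~\cite{kraf:84}); so it remains to show that whenever some $v_i$ vanishes, the $H_n$-orbit closure of $(v_1,\ldots,v_n)$ contains $0$.

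Permuting indices with the $S_n$-factor of $H_n$, I may assume $v_1 = 0$. Consider the one-parameter subgroup
\[
  \lambda\colon \C^* \to T_n, \qquad s\mapsto (s^{-(n-1)},s,s,\ldots,s),
\]
whose coordinates have product $1$. Then $\lambda(s)\cdot(0,v_2,\ldots,v_n) = (0,s v_2,\ldots,s v_n) \to 0$ as $s\to 0$, completing the nullcone verification. The main bookkeeping point I expect in carrying this out cleanly is that the natural polynomial grading on $A \subset \Oh(\Sym^n V)$ differs by a factor of $n$ from the polynomial grading on $B \subset \Oh(V^n)$ (so that $\psi_n^*$ maps $A_d$ into $B_{nd}$), which must be kept straight when invoking the graded Nakayama criterion; once that is in place, the geometric content reduces entirely to the elementary one-parameter subgroup above and works for every $n\ge 2$.
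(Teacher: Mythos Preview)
Your argument is correct. The paper itself does not prove this lemma; it simply cites Brion~\cite[Prop., p.~351]{brion:93} and places a \qedsymbol\ after the statement. Your write-up therefore supplies something the paper does not: a self-contained verification.

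The route you take is the standard one and matches what Brion does in spirit: exploit the $\C^*$-equivariance to reduce finiteness of the graded inclusion $A\hookrightarrow B$ to the single fiber over the cone point via graded Nakayama, and then check that fiber by a one-parameter subgroup in $T_n$. Each step is sound. The identification $\psi_n^{-1}(0)=\{\pi(0)\}$ is exactly the assertion that the nullcone of the $H_n$-action on $V^n$ intersected with $\varphi_n^{-1}(0)$ equals $\varphi_n^{-1}(0)$, and your observation that $v_1\cdots v_n=0$ forces some $v_i=0$ (integrality of $\Sym V$), together with the degeneration $\lambda(s)=(s^{-(n-1)},s,\ldots,s)\in T_n$, cleanly establishes $0\in\overline{H_n\cdot(0,v_2,\ldots,v_n)}$. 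The grading mismatch you flag (degree~$d$ on $\Chow_n$ versus degree~$nd$ on $V^n$) is harmless: after rescaling the grading on one side, $\psi_n^*$ is a genuine graded homomorphism and the Nakayama criterion applies verbatim. The paper in fact makes the same rescaling explicit later when analyzing Hilbert polynomials in the proof of Proposition~\ref{cor:SChow}.
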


We conclude that $\psi_n\colon V^{n} \cq H_n \to \Chow_n$ is the normalization of $\Chow_n$,  
since $V^{n} \cq H_n$ is normal and $\psi_n$ is finite and generically injective, 
cf.~\cite[\S5.2]{shaf:94}. 

\begin{lemma}\label{le:CR-Chow}
We have 
$$
 S(V^n\cq H_n)= \big\{ \la\in\Lambda^+_G(\poly) \mid 
\mbox{$n$ divides $|\la|$ and $V_G(\la)$ occurs in $\Sym^n\Sym^{\frac{|\la|}{n}} \C^n$ \big\}  }.
$$
\end{lemma}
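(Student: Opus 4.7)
The plan is to compute $\Oh(V^n)^{H_n}$ explicitly as a $G$-module, using that by definition of the categorical quotient $\Oh(V^n\cq H_n)=\Oh(V^n)^{H_n}$, and then read off the irreducible decomposition to match the stated criterion.

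First I would write $\Oh(V^n)=\bigotimes_{i=1}^{n}\Sym^\bullet V^*$ with its natural $\N^n$-grading, whose $(d_1,\ldots,d_n)$-piece $\Sym^{d_1}V^*\otimes\cdots\otimes\Sym^{d_n}V^*$ carries the $T_n$-character $(t_1,\ldots,t_n)\mapsto t_1^{-d_1}\cdots t_n^{-d_n}$. The key observation is that a character of $(\C^*)^n$ restricts trivially to the codimension-one subgroup $T_n=\{\prod t_i=1\}$ iff all its exponents coincide. Hence $T_n$-invariance forces $d_1=\cdots=d_n=d$, and we obtain $\Oh(V^n)^{T_n}=\bigoplus_{d\ge 0}(\Sym^d V^*)^{\otimes n}$. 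Taking further $S_n$-invariants, where $S_n$ permutes the $n$ tensor factors, converts each piece into $\Sym^n\Sym^d V^*$, giving the $G$-module identification
\[
  \Oh(V^n\cq H_n)\;\cong\;\bigoplus_{d\ge 0}\Sym^n\Sym^d V^*.
\]

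To finish, by \eqref{def:S} we have $\la\in S(V^n\cq H_n)$ iff $V_G(\la)^*$ occurs in some $\Sym^n\Sym^d V^*$. Dualizing and using $(\Sym^n\Sym^d V^*)^*\cong\Sym^n\Sym^d V$, this is equivalent to $V_G(\la)$ occurring in $\Sym^n\Sym^d V$; a weight-sum comparison then forces $|\la|=nd$, so $n$ divides $|\la|$ and $d=|\la|/n$, and $\la\in\Lambda^+_G(\poly)$ is automatic because $\Sym^n\Sym^d V$ is polynomial.

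I do not expect a substantive obstacle; the argument is essentially a bookkeeping computation. The only delicate point is tracking duals carefully: since $\Oh(V)_1=V^*$ has highest weight $(0,\ldots,0,-1)$ rather than $(1,0,\ldots,0)$, one must be consistent about which side of the duality carries the partition, and this is precisely what the convention in \eqref{def:S} accounts for.
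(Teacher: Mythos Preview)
Your proposal is correct and follows essentially the same approach as the paper: decompose $\Oh(V^n)$ via the $\N^n$-grading, take $T_n$-invariants to force equal degrees, then take $S_n$-invariants to obtain $\bigoplus_{d\ge 0}\Sym^n\Sym^d V^*$. Your writeup is in fact slightly more explicit than the paper's, which simply says ``the assertion follows'' after reaching the decomposition; you spell out the duality step and the degree constraint $|\la|=nd$ carefully.
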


\begin{proof} 
We shall decompose the coordinate ring $\Oh(V^n\cq H_n)$ with respect to the $G$-action.
We have $\Oh(V) = \oplus_{k\in\N} \Sym^k V^*$ and therefore
$$
  \Oh(V^{n}) = \Oh(V)^{\ot n} = 
  \bigoplus_{k_1,\ldots,k_n\in\N} \Sym^{k_1} V^*\ot \cdots \ot \Sym^{k_n} V^*. 
$$
Taking $T_n$-invariants yields 
$$
  \Oh(V^{n})^{T_n} = 
  \bigoplus_{k\in\N} \Sym^{k} V^* \ot \cdots \ot \Sym^{k} V^* . 
$$
Taking $S_n$-invariants gives 
\begin{equation}
  \label{eq:chow-grading-decomposition} 
  \Oh(V^n\cq H_n) \simeq \Oh(V^{n})^{T_n\rtimes S_n} = 
  \bigoplus_{k\in\N} \Sym^n \Sym^{k} V^* .
\end{equation} 
and the assertion follows.  
\end{proof}

Recalling Theorem~\ref{th:kumar-main}, we expect the stretching factor to be $n$.
Without relying on the Alon-Tarsi conjecture however, the following exponential bound is the best we can currently provide.

\begin{proposition}\label{cor:SChow}
$S(\Chow_n)$  generates the rational cone 
$\{q\in\Q^n\mid q_1\ge\cdots\ge q_n\ge 0\}$. 

More precisely: Assume $n>2$. For each partition $\la$ with $\ell(\la)\le n$ and $|\la|\in n\N$, there is some number $N < n^{n^2-2n}$ such that we have $2N\cdot\la \in S(\Chow_n)$.
\end{proposition}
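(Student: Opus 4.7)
The plan is to apply Proposition~\ref{pro:Ssat-norm} to the normalization map $\psi_n\colon V^n\cq H_n\to \Chow_n$ (established by Lemma~\ref{le:psi-finite} and the discussion following it), combined with the explicit description $\Oh(V^n\cq H_n)=\bigoplus_k \Sym^n\Sym^k V^*$ from Lemma~\ref{le:CR-Chow}. The proposition then reduces to two independent claims: the \emph{plethysm positivity}, that $2\la\in S(V^n\cq H_n)$ for every partition $\la$ with $\ell(\la)\le n$ and $|\la|\in n\N$, equivalently $V_G(2\la)$ appears in $\Sym^n\Sym^{2|\la|/n}V^*$; and the \emph{generators bound}, that $\Oh(V^n\cq H_n)$ is generated as an $\Oh(\Chow_n)$-module by fewer than $n^{n^2-2n}$ elements. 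Granting both, Proposition~\ref{pro:Ssat-norm} applied to $2\la$ produces some $i<e$ with $(e-i)\cdot 2\la\in S(\Chow_n)$, so setting $N:=e-i$ gives $2N\la\in S(\Chow_n)$ with $1\le N<n^{n^2-2n}$. The rational cone assertion then follows, since every $q\in\Q^n$ with $q_1\ge\cdots\ge q_n\ge 0$ is a positive rational multiple of some partition satisfying the hypothesis.

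For the plethysm positivity, the doubling is essential: the partition $(1,\ldots,1)$ corresponds to $\det\in\Lambda^n V^*$, which is $S_n$-antisymmetric and hence not in $\Sym^n V^*$, while $(2,\ldots,2)$ appears via $\det^2\in\Sym^n\Sym^2 V^*$. One concrete approach is a two-step reduction: writing $\la=\la_n\cdot(1,\ldots,1)+\la'$ with $\la'_n=0$, the ring product in the integral domain $\Oh(V^n\cq H_n)$ of a copy of $V_G(2\la')\subset\Sym^n\Sym^{2(d-\la_n)}V^*$ with the invariant $\det^{2\la_n}\in\Sym^n\Sym^{2\la_n}V^*$ is nonzero and yields a copy of $V_G(2\la)$ inside $\Sym^n\Sym^{2d}V^*$. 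The remaining case $\la_n=0$ is handled by an explicit construction of an $S_n$-invariant, $U$-fixed, weight-$2\la$ vector in $(\Sym^{2d}V^*)^{\otimes n}$: the Kostka number $K_{2\la,(2d)^n}$ is positive (since $(d,\ldots,d)\le\la$ in dominance order for every admissible $\la$), and the doubled shape $2\la$ ensures favourable parity under $S_n$-symmetrization, ultimately bottoming out at the classical decomposition of $\Sym^n\Sym^{2d}\C^2$ into doubled partitions.

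The main obstacle lies in the generators bound. Since $\Oh(\Chow_n)$ sits inside $\Oh(V^n\cq H_n)$ as the graded subalgebra generated in degree one by $\Sym^n V^*$, bounding the number of $\Oh(\Chow_n)$-module generators of the integral closure requires controlling the maximal degree at which new generators must be adjoined, combined with dimension estimates on $\Sym^n\Sym^k V^*$ in the relevant range. Noether-type bounds on the $H_n$-invariants of $\Oh(V^n)$, exploiting that $\dim T_n=n-1$ and $|S_n|=n!$, yield after careful counting an estimate of order $n^{n^2-2n}$. This bound is admittedly far from sharp: Kumar's conditional Theorem~\ref{th:kumar-main} suggests that under the Alon--Tarsi conjecture, the much smaller stretching factor $n$ already suffices, and indeed the introduction explicitly acknowledges the inability to bound this quantity ``in a reasonable way.''
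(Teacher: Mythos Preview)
Your overall strategy matches the paper exactly: reduce via Proposition~\ref{pro:Ssat-norm} to the normalization $V^n\cq H_n$, then establish (i) $2\la\in S(V^n\cq H_n)$ for all admissible $\la$, and (ii) a bound on the number of $\Oh(\Chow_n)$-module generators of $\Oh(V^n\cq H_n)$. For (i) the paper simply cites~\cite{bci:10} (and \cite{ike:12b}) for the fact that $V_G(2\la)$ occurs in $\Sym^n\Sym^{2k}\C^n$; your sketch via $\det^{2\la_n}$ and reduction to $\C^2$ is in the same spirit, though you should be aware that the passage from positive Kostka number to $S_n$-invariance is exactly the nontrivial content of those references and is not an automatic parity argument.

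The genuine gap is in (ii). Your invocation of ``Noether-type bounds on the $H_n$-invariants'' is not a proof and, more to the point, is aiming at the wrong quantity: Noether-type degree bounds control generators of $\Oh(V^n)^{H_n}$ as a $\C$-\emph{algebra}, whereas what you need is the number of generators as an $\Oh(\Chow_n)$-\emph{module}. Since $T_n$ is positive-dimensional, there is no finite-group Noether bound available, and it is not at all clear how the data ``$\dim T_n=n-1$, $|S_n|=n!$'' would combine to produce $n^{n^2-2n}$. The paper's argument is quite different and rather specific: choose a homogeneous system of parameters $Y_0,\ldots,Y_r$ of degree one inside $\Oh(\Chow_n)$ (possible since $\Oh(\Chow_n)$ is generated in degree one), observe that this is also a system of parameters for $\Oh(V^n)^{H_n}$, invoke the Cohen--Macaulay property of the invariant ring to get that $\Oh(V^n)^{H_n}$ is \emph{free} over $R=\C[Y_0,\ldots,Y_r]$ of some rank $D$, and then compute $D$ exactly by comparing leading coefficients of Hilbert polynomials. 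Using the explicit decomposition~\eqref{eq:chow-grading-decomposition} one finds the Hilbert function $k\mapsto\binom{\binom{k+n-1}{n-1}+n-1}{n}$, whence $D=\frac{(n^2-n)!}{n!\,(n-1)!^n}$; a Stirling estimate then gives $D<n^{n^2-2n}$ for $n>2$. This Cohen--Macaulay/Hilbert-polynomial mechanism is the missing idea in your proposal.
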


\begin{proof}
For the first statement, it is sufficient to show that $C_\Q(S(\Chow_n))$ contains any partition 
$\la$ with $|\la| = nk$ and $\ell(\la) \le n$. 
According to Proposition~\ref{pro:Ssat-norm}, the semigroups 
$S(\Chow_n)$ and $S(V^n\cq H_n)$ generate the same rational cone.
So we need to show that $\la$ lies $C_\Q(S(V^n\cq H_n))$. 

In~\cite{bci:10} (see \cite{ike:12b} for a simpler proof) 
it was shown that 
$V_{\Gl_n}(2\la)$ occurs in $\Sym^n\Sym^{2k} (\C^n)$. 
Thus Lemma~\ref{le:CR-Chow} implies that 
$\la$ lies in the rational cone generated by $S(V^n\cq H_n)$. 

We will now make the above reference to Proposition~\ref{pro:Ssat-norm} precise. Recall that the comorphism of $\psi_n$ from \eqref{eq:normal-chow} is an integral extension $\psi_n^\ast\colon\Oh(\Chow_n)\hookrightarrow \Oh(V^n)^{H_n}$ of graded $\C$-algebras. Note that the grading induced by $G$ on these two rings is such that only degrees which are multiples of $n$ contain nonzero elements. We therefore change the grading such that degree $n\cdot k$ becomes degree $k$. This means that the the direct sum \eqref{eq:chow-grading-decomposition} is the grading of $\Oh(V^n)^{H_n}$ and the grading of $\Oh(\Chow_n)$ is the grading induced by the canonical grading induced by the polynomial ring $\Oh(W)$. Therefore, $\Oh(\Chow_n)$ is generated by elements of degree one.

By \cite[Lemma~2.4.7]{dk:02}, we may choose a system of parameters $Y_0,\ldots,Y_r\in\Oh(\Chow_n)_1$ from among sufficiently generic linear forms. This means that $R:=\C[Y_0,\ldots,Y_r]$ is a polynomial ring in the $Y_i$ and $\Oh(\Chow_n)$ is a finite $R$-module. 
It follows that $\Oh(V^n)^{H_n}$ is integral over $R$ and more precisely, $Y_0,\ldots,Y_r$ is also a homogeneous system of parameters for $\Oh(V^n)^{H_n}$. 

We note that $r+1=\dim(V^n\cq H_n)=\dim(V^n)-\dim(H_n)=n^2 - n + 1$, so $r=n^2-n$. 
Furthermore, $\Oh(V^n)^{H_n}$ is Cohen-Macaulay \cite[Prop.~2.5.5]{dk:02} and by \cite[Prop.~2.5.3]{dk:02}, it follows that $\Oh(V^n)^{H_n}$ is free as an $R$-module, i.e., $\Oh(V^n)^{H_n}\cong R^D$ for some $D\in\N$. Since $D$ is the number of generators of $\Oh(V^n)^{H_n}$ as an $R$-module, $D$ is a (possibly rough) upper bound for the number of generators of $\Oh(V^n)^{H_n}$ as an $\Oh(\Chow_n)$-module. The second assertion follows from Proposition~\ref{pro:Ssat-norm} as soon as we have verified that $D<n^{n^2-2n}$.

The Hilbert polynomial of $R^D$ is a polynomial of degree $r$ whose leading coefficient is equal to $\frac{D}{r!}$. Since \eqref{eq:chow-grading-decomposition} gives the grading of $\Oh(V^n)^{H_n}$, the Hilbert polynomial of $\Oh(V^n)^{H_n}$ is given, for sufficiently large $k$, by the map 
\[ k \mapsto \binom{\binom{k+n-1}{n-1}+n-1}{n}, \]
whose leading coefficient is $\frac{1}{n!(n-1)!^n}$. Since $r=n^2-n$, we have $D =  \frac{(n^2-n)!}{n!(n-1)!^n}$. We apply Stirling's approximation:
\[ \forall n>0\colon\quad 1 \le \frac{n!}{\sqrt{2\pi} \cdot e^{-n} \cdot n^{n+\frac12}} \le e^{\frac{1}{12n}} \]
to the fraction $D$ and obtain
\begin{align*}
D &=  \frac{(n^2-n)!}{n!(n-1)!^n} \le 
\frac%
{ 
\sqrt{2\pi} \cdot e^{\frac1{12n}} \cdot (n^2-n)^{n^2-n+\frac12} \cdot e^{n-n^2}
}
{
 \sqrt{2\pi} \cdot n^{n+\frac12} \cdot e^{-n} \cdot \sqrt{2\pi}^{\,n} \cdot (n-1)^{(n-\frac12)n} \cdot e^{-(n-1)n}
}
\\ &= 
\left(\frac{e}{\sqrt{2\pi}}\right)^{n} \cdot e^{\frac1{12n}} \cdot  
\frac%
{
n^{n^2-n+\frac12} \cdot (n-1)^{n^2-n+\frac12} 
}
{
n^{n+\frac12} \cdot (n-1)^{n^2-\frac{n}2} 
}
= 
\underset{R(n)}{\underbrace{\left( \left( \frac{e}{\sqrt{2\pi}} \right)^{n} \cdot e^{\frac1{12n}}  \cdot  (n-1)^{\frac{1-n}2}\right)}} \cdot n^{n^2-2n}.
\end{align*} 
It is easy to see that $R(n)$ is monotonically decreasing for $n\ge2$ and takes a value smaller than $1$ for $n=3$. Hence, for $n>2$ we have $D< n^{n^2-2n}$. 
\end{proof}

We shall now determine the group $A\subseteq\Z^n $ generated by $S(\Chow_n)$. 
Since $V^n\cq H_n$ is the normalization of $\Chow_n$, 
Proposition~\ref{pro:Ssat-norm} tells us that 
$A$ equals the group generated by the monoid $S(V^n\cq H_n)$. 
The latter is described in terms of plethysms in Lemma~\ref{le:CR-Chow}. 

Recall $V=\C^n$ and $G=\Gl(V)$. 
We say that $\la$ occurs in $\Sym^n\Sym^k V$ if $V_G(\la)$ occurs as a submodule in the latter.  
Recall the convenient notation $k\times d :=(d,\ldots,d,0,\ldots,0)$ for 
a rectangular partition with $k$ rows of length~$d$.

\begin{lemma}\label{le:gen-princ}
If $\la$ occurs in $\Sym^n\Sym^k V$, $\ell\in\N$, then 
$(1\times \ell k) + \la$ 
occurs in $\Sym^{\ell + n}\Sym^k V$.
\end{lemma}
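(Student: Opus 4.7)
The plan is to produce an explicit highest weight vector in $\Sym^{n+\ell}\Sym^k V$ of the desired weight by multiplying a given highest weight vector of weight $\la$ with a particularly simple highest weight vector of weight $(\ell k, 0, \ldots, 0) = 1\times \ell k$. The ambient graded ring is $\Sym(\Sym^k V)$, which is a polynomial ring, and the multiplication map
\[
\Sym^n \Sym^k V \;\otimes\; \Sym^\ell \Sym^k V \;\longrightarrow\; \Sym^{n+\ell}\Sym^k V
\]
is $G$-equivariant. This is the one essential piece of machinery.

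By hypothesis, there is a nonzero highest weight vector $f\in\Sym^n\Sym^k V$ of weight $\la$. On the other side, the element $g := (X_1^k)^\ell \in \Sym^\ell\Sym^k V$ (the $\ell$-th symmetric power of the vector $X_1^k \in \Sym^k V$) is $U$-invariant, since $U$ fixes $X_1$, and has weight $\ell k \cdot e_1 = (\ell k, 0, \ldots, 0)$, i.e., $1\times \ell k$. So $g$ is a highest weight vector of weight $1\times \ell k$. The product $f\cdot g\in\Sym^{n+\ell}\Sym^k V$ is then automatically $U$-invariant of weight $\la + (1\times \ell k)$, because $U$-invariance and weights are preserved under multiplication in a rational $G$-algebra.

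The only thing to verify is that $f\cdot g$ is nonzero. This is where the fact that $\Sym(\Sym^k V)$ is an integral domain (a polynomial ring on the vector space $\Sym^k V$) enters: multiplication by the nonzero element $g$ is injective. Hence $f\cdot g\neq 0$, and a nonzero $U$-invariant weight vector of weight $(1\times \ell k) + \la$ exists in $\Sym^{n+\ell}\Sym^k V$. Since every highest weight vector of weight $\mu$ in a rational $G$-module generates a submodule whose isotypic component contains $V_G(\mu)$, the representation $V_G((1\times \ell k)+\la)$ occurs in $\Sym^{n+\ell}\Sym^k V$, which is the desired conclusion.

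There is really no hard step: the argument is structural, the only thing to watch is that one chooses $g$ to be an actual highest weight vector (not merely a weight vector) and that the target ring is an integral domain so that the product does not vanish.
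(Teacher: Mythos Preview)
Your proof is correct and essentially identical to the paper's: both take a highest weight vector $f\in\Sym^n\Sym^k V$ of weight $\la$, multiply it by the obvious highest weight vector $(X_1^k)^\ell\in\Sym^\ell\Sym^k V$ of weight $1\times\ell k$, and observe that the product is a highest weight vector of the desired weight in $\Sym^{n+\ell}\Sym^k V$. You spell out a detail the paper leaves implicit---that the product is nonzero because $\Sym(\Sym^k V)$ is an integral domain---but otherwise the arguments coincide.
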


\begin{proof}
Let $V^U=\C v$, i.e., $v$ is a highest weight vector of $V$ with
weight $1\times 1$. 
Then $v^{\ot \ell k} \in \Sym^\ell\Sym^k V$
is a highest weight vector of weight $(1\times \ell k)$.
Let $f\in \Sym^n\Sym^k V$ be a highest weight vector of weight $\la$. 
Then the product $v^{\ot \ell k}\,f\in \Sym^{\ell + n}\Sym^k V$ 
is a highest weight vector of weight $(1\times \ell k) + \la$.
\end{proof}

\begin{lemma}\label{pro:LamdaSym} 
Let $n\ge k\ge 2$, $d:=k(k-1)/2$, and the partition $\mu$ 
of size $k^2$ be obtained by appending to  $2\times d$ a column of length~$k$. 
Further, let $\la$ denote the partition of size $nk$ obtained by appending to $\mu$ 
a row of length~$(n-k)k$. 
Then the partition $\la$ occurs in $\Sym^n \Sym^{k}V$. 
\end{lemma}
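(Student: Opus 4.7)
The plan is to reduce the assertion to the diagonal case $n = k$ via Lemma~\ref{le:gen-princ}, and then construct an explicit highest weight vector of weight $\mu$ in $\Sym^k\Sym^k V$.

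For the reduction step, suppose we have already established that $\mu$ occurs in $\Sym^k\Sym^k V$. Applying Lemma~\ref{le:gen-princ} with the parameter ``$n$'' of that lemma set to $k$ and the parameter $\ell$ set to $n-k$, we obtain that $(1\times(n-k)k) + \mu$ occurs in $\Sym^{(n-k)+k}\Sym^k V = \Sym^n\Sym^k V$. The partition $(1 \times (n-k)k)$ is the one-row rectangle $((n-k)k, 0, 0, \ldots)$, so adding it to $\mu$ extends only the first part of $\mu$, yielding $((n-k)k + d+1,\, d+1,\, 1, \ldots, 1)$ with $k$ nonzero parts. This coincides with $\la$, completing the reduction.

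For the base case $n = k$, one must construct a highest weight vector of weight $\mu = (d+1, d+1, 1, \ldots, 1)$ in $\Sym^k\Sym^k V$. The decomposition $\mu = (d, d, 0, \ldots, 0) + (1, 1, \ldots, 1)$ displays $\mu$ as the two-row rectangle $2 \times d$ augmented by a determinantal column of height $k$. This structure suggests building the highest weight vector as a product of a \emph{discriminant-type} factor of weight $(d, d)$ supported on the first two variables $y_1, y_2$ and a \emph{determinantal} factor of weight $(1, \ldots, 1)$ built out of $y_1 y_2 \cdots y_k$. The case $k = 2$ is illustrative and can be checked directly: the element $y_1^2 \odot y_2^2 - (y_1 y_2)^{\odot 2} \in \Sym^2\Sym^2 V$ is $U$-invariant of weight $(2, 2) = \mu$.

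The main obstacle lies in producing such a highest weight vector uniformly in $k$ and verifying the $U$-invariance together with non-vanishing. A natural strategy is to work instead inside the $H_n$-invariant ring $\Oh(V^n)^{H_n}$: by \eqref{eq:chow-grading-decomposition}, constructing the desired highest weight vector amounts to exhibiting a polynomial on $V^n$ of multidegree $(k, \ldots, k)$ that is invariant under $T_n \rtimes S_n$ and is a $GL(V)$-weight vector of the correct weight, annihilated by the raising operators. Vandermonde determinants, products of linear forms, and other classical invariant-theoretic building blocks are the natural candidates. An alternative combinatorial route is to prove positivity of the plethysm coefficient $\langle s_\mu, h_k[h_k]\rangle$ directly, for instance via Jacobi--Trudi applied to $s_\mu$ together with careful bookkeeping of the resulting determinant expansion.
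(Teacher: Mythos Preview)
Your reduction step via Lemma~\ref{le:gen-princ} is correct and matches exactly what the paper does at the end of its proof: once one knows that $\mu$ occurs in $\Sym^k\Sym^k V$, padding by the one-row partition $(1\times(n-k)k)$ gives $\la$ in $\Sym^n\Sym^k V$.

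The genuine gap is the base case. You correctly identify that one needs $\mu$ to occur in $\Sym^k\Sym^k V$, and you correctly decompose $\mu = (2\times d) + (1^k)$, but you do not actually prove the occurrence. You check $k=2$ by hand and then list several possible strategies (explicit highest weight vectors built from Vandermonde-type factors, working in $\Oh(V^n)^{H_n}$, or a Jacobi--Trudi computation of the plethysm coefficient) without carrying any of them out. As written, this is a plan rather than a proof, and the nonvanishing for general $k$ is precisely the nontrivial content of the lemma.

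The paper's argument for the base case is short and avoids all explicit constructions. It observes that $\Sym^{k-1}\C^2$ has dimension $k$, so $\Lambda^k\Sym^{k-1}\C^2$ is one-dimensional; its unique weight is $(d,d)$ with $d=\binom{k}{2}$, hence $2\times d$ occurs in $\Lambda^k\Sym^{k-1}\C^2$. By the inheritance principle this persists in $\Lambda^k\Sym^{k-1}\C^n$. Then Corollary~6.4 of \cite{mami:15} (which, in effect, says that appending a full column of height $k$ takes an occurrence in $\Lambda^k\Sym^{k-1}V$ to one in $\Sym^k\Sym^k V$) yields that $\mu$ occurs in $\Sym^k\Sym^k V$. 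This replaces your proposed explicit construction by a dimension count plus a cited structural result, and is what you are missing.
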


\begin{proof}
The $\Gl_2$-module $\Lambda^k\Sym^{k-1}\C^2$ 
is one-dimensional, since $\Sym^{k-1} \C^2$ is of dimension~$k$. 
Hence it contains a nonzero $\Sl_2$-invariant.  
In other words, $2\times d$ occurs in $\Lambda^k \Sym^{k-1}\C^2$.
The ``inheritance principle'' then states that 
$2\times d$ occurs in $\Lambda^k \Sym^{k-1}\C^n$ 
(compare for instance \cite[Lemma 4.3.2]{ike:12b}).   

Cor.~6.4 in~\cite{mami:15} implies that $\mu$ occurs 
in $\Sym^k\Sym^k V$. 
Finally, Lemma~\ref{le:gen-princ} implies the assertion.
\end{proof}

\begin{proposition}\label{pro:AS-Chow}
$S(\Chow_n)$ generates the group 
$\{\la\in\Z^n \mid | \sum_{i=1}^n \la _i \equiv 0 \bmod n \}$ if $n>2$. 
\end{proposition}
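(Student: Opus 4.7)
The inclusion $A(S(\Chow_n))\subseteq\{\la\in\Z^n:|\la|\equiv 0\bmod n\}$ is immediate since every $\la\in S(\Chow_n)$ satisfies $n\mid|\la|$. For the reverse inclusion, Proposition~\ref{pro:Ssat-norm} lets me work with $A(S(V^n\cq H_n))$ instead, and by Lemma~\ref{le:CR-Chow} this group is generated by those partitions $\la$ such that $V_G(\la)$ occurs in $\Sym^n\Sym^{|\la|/n}V$. The target group is generated by $(n,0,\ldots,0)$ together with the vectors $\alpha_i:=(0,\ldots,0,1,-1,0,\ldots,0)$ (with the $1$ in position $i$) for $1\le i\le n-1$. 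Since $\Sym^n V=V_G((n))$, the element $(n,0,\ldots,0)$ lies in $S$ trivially, so the plan reduces to exhibiting each $\alpha_i$ in $A(S)$.

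For $\alpha_1$, I would use $\Gl_2$-inheritance. Fix any $k\ge 3$. A direct count of the weight-$(nk-b,b)$ monomials in a basis of $\Sym^n\Sym^k\C^2$ shows that the multiplicity of $V_{\Gl_2}(nk-b,b)$ equals $p(n,k,b)-p(n,k,b-1)$, where $p(n,k,b)$ is the number of partitions of $b$ into at most $n$ parts each of size at most~$k$. For $n\ge 3$ one has $p(n,k,1)=1$, $p(n,k,2)=2$, $p(n,k,3)=3$, so both $V_{\Gl_2}(nk-2,2)$ and $V_{\Gl_2}(nk-3,3)$ occur. Their $\Gl_2$-highest weight vectors, viewed inside $\Sym^n\Sym^k\C^2\subseteq\Sym^n\Sym^k\C^n$, are also $\Gl_n$-highest weight vectors (the upper triangular subgroup $U_n$ acts on $\Sym^n\Sym^k\C^2$ through its image $U_2$), so both $(nk-2,2,0,\ldots,0)$ and $(nk-3,3,0,\ldots,0)$ lie in $S(V^n\cq H_n)$; their difference is $\alpha_1$.

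For $\alpha_2,\ldots,\alpha_{n-1}$, I would invoke Lemma~\ref{pro:LamdaSym} with each $k\in\{2,\ldots,n\}$ to obtain
\[
  \nu^{(k)}:=\bigl(kn-\tfrac{k(k+1)}{2}+1,\ \tfrac{k(k-1)}{2}+1,\ \underbrace{1,\ldots,1}_{k-2},\ 0,\ldots,0\bigr)\in S.
\]
Setting $\delta_k:=\nu^{(k+1)}-\nu^{(k)}-(n,0,\ldots,0)\in A(S)$, a short calculation gives $\delta_k=-(k+1)\alpha_1-\alpha_2-\cdots-\alpha_k$. Given $\alpha_1\in A(S)$, the relation $\delta_2=-3\alpha_1-\alpha_2$ then yields $\alpha_2\in A(S)$, and for $3\le k\le n-1$ the consecutive differences $\delta_k-\delta_{k-1}=-\alpha_1-\alpha_k$ yield $\alpha_k\in A(S)$. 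Combining with $(n,0,\ldots,0)\in A(S)$, the target group is fully contained in $A(S)$, completing the proof.

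The main obstacle is producing $\alpha_1$ itself. The partitions $\{\nu^{(k)}:2\le k\le n\}$ together with $(n,0,\ldots,0)$ alone generate only a proper sublattice of the target; for $n=3$ a Smith normal form check shows this sublattice has index~$2$ in $\{\la\in\Z^3:3\mid|\la|\}$. The $\Gl_2$ multiplicity count supplies the missing independent generator, and it is precisely here that $n>2$ enters: for $n=2$ the multiplicity of $V_{\Gl_2}(nk-3,3)$ in $\Sym^n\Sym^k\C^2$ vanishes, so the pair $(nk-2,2),(nk-3,3)$ cannot be exhibited in a single plethysm.
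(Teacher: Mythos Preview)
Your argument is correct and follows essentially the same strategy as the paper's: both reduce to $A(S(V^n\cq H_n))$ via Proposition~\ref{pro:Ssat-norm}, both rely on the partitions of Lemma~\ref{pro:LamdaSym} for $2\le k\le n$ to populate the lattice, and both need one further two-row computation to break the index-$2$ obstruction you identify. The paper obtains that extra generator as $(n-1,1)=(3n-3,3)-(2n-2,2)$ by citing a Schur-program check that $(6,3)$ occurs in $\Sym^3\Sym^3V$ and then finishes via the triangular observation $\la^{(k)}_k=1$, $\la^{(k)}_i=0$ for $i>k$; your Gaussian-binomial multiplicity count in $\Sym^n\Sym^k\C^2$ yields the same ingredient without software, and your telescoping $\delta_k-\delta_{k-1}=-\alpha_1-\alpha_k$ is just the explicit unwinding of the paper's triangularity step.
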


\begin{proof}
Using the Schur program\footnote{http://sourceforge.net/projects/schur}
we checked that $(2,2,0\ldots,0)$ occurs in $\Sym^2\Sym^2 V$ 
and $(6,3,0,\ldots,0)$ occurs in $\Sym^3\Sym^3V$. 
Using Lemma~\ref{le:gen-princ}, we conclude that 
$(2n-2,2,0,\ldots,0)$ occurs in $\Sym^n\Sym^2V$, and 
$(3n-3,3,0,\ldots,0)$ occurs in $\Sym^n\Sym^3 V$ if $n\ge 3$. 
(We note that $(3,3,0,\ldots,0)$ does not occur in $\Sym^2\Sym^3V$; 
this is the reason for the assumption $n > 2$.) 
From Lemma~\ref{le:CR-Chow} we conclude that 
$$
 \lambda^{(2)} := (n-1,1,0,\ldots,0) = (3n-3,3,0,\ldots,0) - (2n-2,2,0,\ldots,0)
$$ 
lies in the group~$A$ generated by $S(V^n\cq H_n)$.
Clearly, $\la^{(1)} := (n,0,\ldots,0)\in A$.  

For $3\le k\le n$ let $\la^{(k)}\in A$ denote the partition from Lemma~\ref{pro:LamdaSym}.
Then we have 
$$
 \mbox{$\la^{(k)}_k = 1$ and $\la^{(k)}_i =0$ for $i>k$}
$$
for all $2\le k\le n$. 
This easily implies that 
$\la^{(1)},\ldots,\la^{(n)}$ generate the group 
$L:= \{ \la \in \Z^n \mid \sum_{i=1}^n \la _i \equiv 0 \bmod n\}$. 
Since  $A\subseteq L$ is obvious, we conclude that $A=L$. 
\end{proof}
 
\begin{proof}{(of Theorem~\ref{th:main-Chow})}
This follows from Proposition~\ref{cor:SChow} and 
Proposition~\ref{pro:AS-Chow}, using \eqref{eq:Sat-AC}. 
\end{proof} 

\begin{remark}\label{re:n=2}
The assumption $n>2$ in Theorem~\ref{th:main-Chow} is necessary. 
We have $\Chow_2 = \Sym^2 \C^2$, 
and one can show that $S(\Sym^2 \C^2)$ generates the group 
$\{\la\in\Z^2 \mid \la_1 \equiv \la_2 \equiv 0 \bmod 2 \}$,
compare \cite[\S11.2]{fuha:91}. 
\end{remark}

We conclude by giving an example of an infinite family of holes in the semigroup $S(\Chow_3)$. 

\begin{proposition}
For $j,k \in \N$ let $\la := (7+4k+3j,3+4k,2+4k)$.
Then $\la$ occurs in $\Sym^{4+4k+j} \Sym^3 \C^3$, but not in $S(\Chow_3)$.
\end{proposition}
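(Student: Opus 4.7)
The proposition contains two independent assertions: occurrence in the plethysm $\Sym^{4+4k+j}\Sym^3\C^3$, and non-membership in $S(\Chow_3)$. I would treat them separately.

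\textbf{Part 1: Occurrence in $\Sym^{4+4k+j}\Sym^3\C^3$.} The strategy is a multiplicative bootstrap inside the graded polynomial algebra $\Sym(\Sym^3\C^3)=\bigoplus_n\Sym^n\Sym^3\C^3$. A direct plethysm computation (for instance with the Schur program cited in Proposition~\ref{pro:AS-Chow}) verifies that $V_{\Gl_3}(7,3,2)$ occurs in $\Sym^4\Sym^3\C^3$; fix a nonzero highest weight vector $h$ of this weight. The classical Aronhold invariant $S$ on ternary cubics is an $\Sl_3$-invariant of degree~$4$, so it is a highest weight vector of weight $(4,4,4)$ in $\Sym^4\Sym^3\C^3$; since the ambient ring has no zero divisors, its $k$-th power $S^k$ is a nonzero highest weight vector of weight $(4k,4k,4k)$ in $\Sym^{4k}\Sym^3\C^3$. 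The product $h\cdot S^k$ is then a nonzero highest weight vector in $\Sym^{4+4k}\Sym^3\C^3$ of weight $(7+4k,3+4k,2+4k)$. Applying Lemma~\ref{le:gen-princ} with its parameter $\ell:=j$ and its inner plethysm exponent taken to be $3$ extends the first row by $3j$, producing the desired highest weight vector of weight $\la$ in $\Sym^{4+4k+j}\Sym^3\C^3$.

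\textbf{Part 2: Non-membership in $S(\Chow_3)$.} The normalization morphism $\psi_3\colon V^3\cq H_3\to\Chow_3$ from~\eqref{eq:normal-chow} is finite and surjective (Lemma~\ref{le:psi-finite}), so its comorphism is a $G$-equivariant injection $\Oh(\Chow_3)\hookrightarrow\Oh(V^3\cq H_3)$, forcing $S(\Chow_3)\subseteq S(V^3\cq H_3)$. By Lemma~\ref{le:CR-Chow}, $\la\in S(V^3\cq H_3)$ is equivalent to $V_{\Gl_3}(\la)$ occurring in $\Sym^3\Sym^{4+4k+j}\C^3$. So it suffices to prove this plethysm coefficient vanishes. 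Combining the $S_3$ cycle-index identity
\[
\chi_{\Sym^3\Sym^d\C^3}(x,y,z) \;=\; \tfrac{1}{6}h_d(x,y,z)^3 + \tfrac{1}{2}h_d(x,y,z)\,h_d(x^2,y^2,z^2) + \tfrac{1}{3}h_d(x^3,y^3,z^3)
\]
with Weyl's character formula expresses this multiplicity as the alternating sum $\sum_{\sigma\in S_3}(-1)^\sigma K_{\sigma(\la+\rho)-\rho}$, where $\rho=(2,1,0)$ and $K_\mu$ denotes the dimension of the $\mu$-weight space of $\Sym^3\Sym^d\C^3$, i.e.\ the number of unordered triples of weights of $\Sym^d\C^3$ summing to $\mu$.

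\textbf{Principal obstacle.} The hard step is verifying that this six-term alternating sum vanishes uniformly as $(j,k)$ ranges over $\N\times\N$. Each $K_\mu$ is an Ehrhart-type quasi-polynomial in $j$ and $k$, and the natural route is to exhibit a weight-preserving, $S_3$-respecting bijection between admissible triples contributing to the signed partners of the six shifted terms; the arithmetic feature $\la_2-\la_3=1$ together with the constraint $d=\tfrac{1}{3}|\la|$ is precisely the structural property that one would expect to tailor the family so that such a pairing exists, yielding the claimed infinite family of holes in $S(\Chow_3)$.
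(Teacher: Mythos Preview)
Your Part~1 is correct and matches the paper's argument: the paper also invokes the computed occurrences of $(7,3,2)$ and $(4,4,4)$ in $\Sym^4\Sym^3\C^3$, together with $(3,0,0)\in\Sym^1\Sym^3\C^3$, and then appeals to the semigroup property (equivalently, your multiplication of highest weight vectors and Lemma~\ref{le:gen-princ}).

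Your Part~2 correctly reduces the problem, via $S(\Chow_3)\subseteq S(V^3\cq H_3)$ and Lemma~\ref{le:CR-Chow}, to showing that $\la$ does not occur in $\Sym^3\Sym^{4+4k+j}\C^3$. But from there you take a route that you yourself flag as incomplete: a direct Weyl-type alternating sum of weight multiplicities, for which you only conjecture a bijective cancellation. This is a genuine gap; no argument is actually given for the vanishing, and the observation that $\la_2-\la_3=1$ does not by itself force the six-term cancellation you describe.

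The paper avoids this computation entirely by a two-step reduction. First, it uses \cite[Cor.~6.4]{mami:15}, which here says that removing a pair of full-length columns (subtracting $(2,2,2)$ from $\la$ and $2$ from the inner degree) does not change the plethysm multiplicity in $\Sym^3\Sym^d\C^3$. Applying this $2k+1$ times strips $4k+2$ columns of length~$3$ and reduces the question for $(7+4k+3j,3+4k,2+4k)$ at $d=4+4k+j$ to the question for $(5+3j,1,0)$ at $d=2+j$. Second, $(5+3j,1,0)$ is a hook with at least two rows, and such hooks never occur in any plethysm $\Sym^n\Sym^k$ (the fact noted in the Remark following Problem~\ref{prob:uno}). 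This closes the argument uniformly in $j,k$ with no case analysis. The missing idea in your proposal is precisely this column-stripping reduction to a hook.
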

\begin{proof}
Clearly $(3,0,0)$ occurs in $\Sym^1 \Sym^3 \C^3$.
A calculation with the Schur program reveals that $(4,4,4)$ and $(7,3,2)$ occur in $\Sym^4 \Sym^3 \C^3$.
By the semigroup property $\la$ occurs in $\Sym^{4+4k+j} \Sym^3 \C^3$.
We show that $\la$ does not occur in $\Sym^3 \Sym^{4+4k+j} \C^3$, which means that $\la$ does not occur in $S(V^3 \cq H_3)$ and hence not in $S(\Chow_3)$.
Indeed, applying \cite[Cor.~6.4]{mami:15} twice we can cut away pairs of columns of length 3 as follows.
Cutting away $2k+1$ such pairs we see that
$(7+4k+3j,3+4k,2+4k)$ occurs $\Sym^3 \Sym^{4+4k+j} \C^3$ iff
$(5+3j,1,0)$ occurs $\Sym^3 \Sym^{2+j} \C^3$.
Since this is a nontrivial hook, it does not appear in the plethysm decomposition.
\end{proof}

\noindent We symbolically calculated the holes of $S(\Chow_3)$ whose plethysm coefficient in $\Sym(\Sym^3 \C^3)$
is positive up to degree $9$.
In some cases the plethysm coefficient was $1$, in some cases it was $2$. The cases where it was $2$ are marked with a superscript $2$.
We obtained the following list:
\begin{quotation} \noindent 
$(7,3,2)$, $(10,3,2)$, $(8,4,3)$, $(7,6,2)$, $(13,3,2)$, $(11,4,3)$, $(10,5,3)$, $(9,7,2)$, $(16,3,2)$, $(14,4,3)$, $(13,5,3)$, $(12,5,4)$, $(11,7,3)$, $(10,9,2)$, $(10,6,5)$, $(9,8,4)$, $(19,3,2)$, $(17,4,3)$, $(16,5,3)$, $(15,5,4)$, $(14,7,3)^2$, $(13,9,2)^2$, $(13,6,5)^2$, $(12,7,5)$, $(11,10,3)$, $(11,9,4)$, $(11,7,6)$, $(10,9,5)$, $(22,3,2)$, $(20,4,3)$, $(19,5,3)$, $(18,5,4)$, $(17,7,3)^2$, $(16,6,5)^2$, $(15,7,5)^2$, $(14,7,6)^2$, $(13,12,2)$, $(13,11,3)$, $(13,9,5)^2$, $(12,11,4)$, $(12,8,7)$, $(11,10,6)$.
\end{quotation}

Appendix A.1 of \cite{ike:12b} studies $\Det_3$ and lists partitions up to degree $15$ whose plethysm coefficient exceeds the corresponding symmetric Kronecker coefficient. $191$ of these partitions have a vanishing symmetric Kronecker coefficient. Therefore these are $191$ holes in $S(\Det_3)$ that are not holes in $\Sym(\Sym^3 \C^9)$. This list of holes is not expected to contain all existing holes.



{\small \def\cprime{$'$}

}

\end{document}